\pgfplotsset{compat=1.18}
\newtheorem{theorem}{Theorem}
\newtheorem{definition}{Definition}
\newtheorem{corollary}{Corollary}
\newtheorem{proposition}{Proposition}
\newtheorem{assumption}{Assumption}
\newtheorem{lemma}{Lemma}
\newcommand{\ep}{\epsilon}
\newcommand{\E}{\mathbb E}
\definecolor{red1}{rgb}{.7,0,0}
\definecolor{red2}{rgb}{.85,.13,0}
\definecolor{red3}{rgb}{1,.3,.3}
\definecolor{red4}{rgb}{1,.4,.6}
\definecolor{red5}{rgb}{1,.6,.8}
\definecolor{blue1}{rgb}{0,0,.7}
\definecolor{blue2}{rgb}{0,.13,.85}
\definecolor{blue3}{rgb}{.3,.3,1}
\definecolor{blue4}{rgb}{.4,.6,1}
\definecolor{blue5}{rgb}{.7,.9,1}
\definecolor{gold}{rgb}{1,.7,.2}
\DeclareMathOperator*{\argmax}{arg\,max}
\DeclareMathOperator*{\argsup}{arg\,sup}
\title{Scoring Auctions with Coarse Beliefs}
\author{Joseph Feffer\thanks{Stanford Graduate School of Business. Email: feffer@stanford.edu. I am grateful to Mohammad Akbarpour, Benjamin Brooks, Songzi Du, Benjamin Golub, Klajdi Hoxha, Mike Ostrovsky, Andrzej Skrzypacz, Eric Tang, and Filip Tokarski and seminar participants at Stanford University and Microsoft Research for helpful comments and suggestions. }}
\date{\today}
\begin{document}

\begin{titlepage}
\maketitle
\begin{abstract}
This paper studies a simplicity notion in a mechanism design setting in which agents do not necessarily share a common prior. I develop a model in which agents participate in a prior-free game of (coarse) information acquisition followed by an auction. After acquiring information, the agents have uncertainty about the environment in which they play and about their opponents' higher-order beliefs. A mechanism admits a coarse beliefs equilibrium if agents can play best responses even with this uncertainty. Focusing on multidimensional scoring auctions, I fully characterize a property that allows an auction format to admit coarse beliefs equilibria. The main result classifies auctions into two sets: those in which agents learn relatively little about their setting versus those in which they must fully learn a type distribution to form equilibrium strategies. I then find a simple, primitive condition on the auction's rules to distinguish between these two classes. I then use the condition to categorize real-world scoring auctions by their strategic simplicity.

\end{abstract}
\end{titlepage}

\section{Introduction}
How do agents form beliefs in economic settings? One answer is that they learn passively; participants observe enough data about past gameplay for their beliefs about a state and about others to converge to a common prior (\cite{morris95}). Alternatively, agents could actively acquire information about the setting in which they play. This type of learning process may seem more realistic, especially when a game is played infrequently or when the environment changes rapidly. In such a setting, it seems unlikely that agents could learn everything about their environment. They may only be able to acquire limited or coarse information about their environment and fail to form fully precise first-order beliefs. In addition, it may be impossible to discover what opponents have learned and thus agents' higher-order beliefs may not be fully specified. In such cases, the common prior assumption may not be innocuous. Given such restrictions on the precision of participants' beliefs, many mechanisms will not be easily playable and game-theoretic solution concepts will not be implementable; agents will not have beliefs that are well-defined enough to identify an action as optimal. Thus, for a designer valuing the simplicity of agents' strategies, it may be important to distinguish between mechanisms that agents can and cannot play with a coarse learning process. 

Public procurement auctions represent an ideal setting in which to study strategizing under such restrictions. First, these auctions are almost always conducted in pay-as-bid formats that do not admit dominant-strategy or ex post equilibria. Thus, participants' strategies must depend nontrivially on their beliefs. Second, auctions for large government contracts occur too infrequently for beliefs to converge passively. However, the firms are still able to form beliefs; in practice, they often hire market researchers and auction consultants to help them acquire relevant information and bid intelligently. Third, auctioneers in public market design settings often care about strategic simplicity for participants.\footnote{A 2021 EU directive listed transparency and fairness as core principles for auction procedure. Strategically simple mechanisms are more transparent because the mapping from a firm's fundamentals to their bidding strategy is more clear. They are fairer because they require less sophistication and less market research to form best-response strategies. See \citet{eu} for more information.} Lastly, governments often care about non-price features of proposed projects, turning a simple auction setting into a complex multidimensional one. This increase in dimensionality poses additional problems for the designer in choosing a mechanism. Standard optimality criteria often yield mechanisms that are too difficult to characterize or too complicated to implement in practice. Thus, other criteria must be used to select a mechanism. The simplicity notion studied in this paper may be one useful criterion. 

Specifically, I model a scenario in which a buyer (female) seeks to obtain a good from one of two potential sellers (male)\footnote{The restriction to $2$ participants is mostly for ease of exposition. I consider auctions with a general number of participants in the discussion.}. The good can be provided at
heterogeneous quality levels, so the buyer has preferences over both the quality of the good she obtains as well as the price at which she procures it. She wants to pick a mechanism from a class of auctions called first-score auctions. These are the generalizations of first-price auctions to this procurement setting and are commonly used in practice. Before the auction, each seller receives private information about his own cost structure. However, he has no information about the state of the industry, which is the distribution from which his opponent’s type is drawn. He is, however, able to endogeneously learn coarse information about this type distribution. Each seller selects a finite number of generalized moments to investigate and learns the values of those means at the true type distribution. This process is meant to approximate the information that can be learned sufficiently well through sampling and estimation. Because information is coarse, the sellers cannot fully learn the type distribution. Thus after learning, each seller has uncertainty about the environment in which he plays and about his opponents' beliefs. I characterize the first-score auction formats in which both of these forms of uncertainty are strategically irrelevant. In the equilibria of such mechanisms called coarse beliefs equilibria (CBE), there is a bid in the auction that is a best response at all possible resolutions of this uncertainty. A CBE is formally defined through a choice of an information acquisition strategy for each player and a choice of action to play in the auction that may depend on what was learned. In equilibrium, a player will not know what his opponent has learned; he will simply know that his opponent is collecting and using information intelligently.  

\textbf{Results.} To characterize CBE, I study equilibria in which strategies can be formed with coarse information. In my first main result, I show this can happen if and only if the mechanism satisfies a property called \textit{equilibrium invariance.} This property holds when, in equilibrium, the auction's ex post allocation function is \textit{invariant} to the type distribution. That is, if type $A$ is given the contract over type $B$ at one distribution, then $A$ will be given the contract over $B$ at every possible type distribution. \textit{Equilibrium invariance} is a natural property that is satisfied in canonical auction formats---like first-price, second-price, and all-pay auctions---when agents' private information is one-dimensional. Next, I show that, while multidimensional settings generally do not admit some of the nice properties of one-dimensional ones, these properties are restored exactly when equilibrium invariance is satisfied. Theorem $2$ states that equilibrium invariance holds if and only if first-score and ``second-score'' auctions satisfy a Payoff Equivalence result. In general, the multidimensional forms of first and second-price auctions do not implement the same interim allocation rules. This makes first-score auctions harder to study than their one-dimensional counterparts because we cannot understand them through their second-score counterparts. However, when equilibrium invariance is satisfied, payoff equivalence is restored and strategies and expected outcomes of an auction are easier to analyze.  

For studying CBE, the connection between first and ``second-score'' auctions has important implications. First, it tells us that when equilibrium strategies can be formed with a finite number of moments, then they can be played with just $2$ moments. Thus, depending on their rules, first-score auctions either require agents to share fully precise, common beliefs or require them to know relatively little about their industry. Second, in the latter case, it gives a nice interpretation of the information that agents need to acquire in a CBE. Third, it allows us to derive a simple condition on the rules of a first-score auction that can distinguish mechanisms that admit CBE and those that do not. This condition is an easily testable function of the auction's rules and, in particular, does not require explicit computation of an equilibrium. Applying this condition, I show that a strict subset of scoring rules used in practice satisfy my simplicity criterion. Thus, considering the simplicity of strategies could help designers by restricting the set of mechanisms they must consider implementing. 

 In the last section of the paper, I provide microfoundations connecting a general notion of a CBE to the simpler concept studied earlier in the paper.  I formally model a prior-free game with a general information acquisition process. Agents still learn about a state, which is the true distribution of opponents' payoff-relevant types. Now, there is an abstract set of signal structures from which agents can learn, called an information technology. This is the only feature of the game that is assumed to be common knowledge. Each agent privately selects a partition of the state space from this set and learns the element of this partition that contains the state. When playing in the auction, each agent must use the information they have learned to form beliefs over opponents' incentives, information acquisition strategies, and signal realizations. I formally define a CBE with respect to an information technology in the same sense as before, by requiring an optimal auction strategy for each agent at all beliefs consistent with what he has learned. While belief structures can be quite complicated, I show in Theorem \ref{thm:CBE_informal} that when a CBE exists, it can be understood through studying corresponding equilibria in common prior games. This simplifies the problem of studying CBE by shrinking the space of beliefs that need to be considered. When the information technology relates to a sampling and estimation process, Theorem \ref{thm:CBE_informal} tells us a CBE in the formal sense coincides with the solution concept studied earlier in the paper. An immediate result of Theorem \ref{thm:CBE_informal} is that, for any information technology, no type learns the distribution fully in a CBE. However, full information about the state is dispersed among all types and thus, the same outcomes are achieved as when the state is common knowledge. As famously argued by \citet{hayek}, markets solve the problem of limited and dispersed knowledge by providing incentives to individuals to report this information truthfully. Mechanisms that admit CBE perform a similar role in settings in which beliefs matter. They solve the problem of limited beliefs by providing incentives for agents to learn about their environment and then make a report dependent on both their payoff types and what they have learned. In a CBE, the auction mechanism aggregates this information and yields an equilibrium outcome. 
  
\textbf{Related literature.} This paper primarily contributes to the literature on simplicity in mechanism design. While most works in this literature (like \cite{osp}) study failures of contingent reasoning in extensive form games, my work instantiates a more general definition of simplicity advanced in \cite{dworczak-li}. These authors study settings in which the relaxation of a standard assumption of agent behavior, in general, leads to strategic confusion. They define a simple mechanism as one in which this confusion either is never generated or is easily resolvable. In this paper, the strategic confusion stems from the coarseness of agents' learning process. Of papers that study specific forms of \cite{dworczak-li}'s general simplicity, \cite{borgers-li} and \cite{agatheisi} similarly consider belief-related notions of simplicity. The first paper considers a setting in which agents have precise first-order beliefs but cannot form accurate higher-order beliefs. They thus call a mechanism simple if agents' strategies do not depend on these higher-order beliefs. The second studies a form of simplicity when agents do not fully know their values. They find that in almost all practically relevant settings, agents should learn about opponents' values to understand what they should learn about their own.   Aside from the new form of simplicity studied, this paper offers a new direction for the literature by advancing setting-specific notions of simplicity. Rather than initially considering general games that average people can play, I focus on a specific economic setting whose participants are more sophisticated. A more general approach may miss the specific features of this setting (multidimensionality) and the specific constraints that are salient for these participants (coarse learning as opposed to reasoning) that make it interesting to study. The insights derived from this approach need not be constrained to the initial setting studied, but without having one in mind, compelling notions of simplicity may go under-studied or under-appreciated. 

Of course, this paper is not the first to take issue with and relax the strength of the common prior assumption. Two primary methods of relaxing the CPA appear in the robust mechanism design and rationalizability literatures, epitomized by \citet{bergmorr} and \citet{pearce}. These papers study games that admit equilibrium strategies and iteratively undominated strategies, respectively, under all arbitrary belief structures. Viewing belief-free implementation as another demanding restriction, intermediate solution concepts between common prior and prior-free environments have been developed. One natural way to interpolate between these two regimes is to assume agents' beliefs are defined by a finite number of moments. \citet{brooksdu} and \citet{gretschko-mass} study versions of the optimal auction design problem where the auctioneer's information about agents' values is captured by moments while 
 \citet{ollarpenta} considers moment-based belief restrictions in a rationalizability context\footnote{The concept studied by \citet{ollarpenta} relates closely to the idea of $\Delta$-rationalizability defined in \citet{rationalizability}.}\footnote{\citet{wolitzky} also considers a similar moment-based relaxation and studies robustly optimal mechanisms in a trade setting.}. I take a similar approach but apply these considerations to the agents' problem. Additionally, in my model, the moments that restrict beliefs are endogenously chosen by agents rather than being exogenously determined as in prior work. Viewed as an intermediate robustness notion, a coarse beliefs equilibrium sits between Bayes-Nash and robust implementation by requiring an optimal action on some strict subset of the space of beliefs. Another way to interpolate between a common prior and belief-free setting is to assume agents can observe data from a common dataset. \cite{modibo} and \cite{annie} both study mechanism design in such settings though with different aims. The former studies the complexity of policies that can be successfully implemented when agents have finite data. The latter studies the likelihood of different outcomes when agents can update differently from shared data. Interpreting the moments in my model as resulting from sampling and estimation, my approach resembles these authors' ideas though importantly, I abstract from noise and other practical considerations stemming from finite data.

The endogeneity of agents' learning process aligns this paper with the literature on games with information acquisition. However, the treatment of this learning process, and, as a result, its interpretation notably differ in this model. I study prior-free information acquisition about the distribution of a random variable; seminal papers in this literature like \citet{matejkamckay} study prior-dependent information acquisition about a realization of that variable. This distinction mirrors that between market research, in which companies learn general facts about their environment, and opposition research, in which businesses know who their competitors are and learn specific information about them. In practice, both of these processes are relevant. Additionally, agents' main constraints on learning fully in my model are simply that this is infeasible and not that it would be too costly.   

Finally, in terms of application, this paper contributes to the  theoretical literature on the use of a type of public procurement auction called a first-score auction. This paper, first, provides a new result about strategizing in first-score auctions. The most well-known scoring auction papers such as \citet{che} and \citet{ac08} study simple, quasilinear scoring auctions and model sellers with simple cost functions. In such a simplified context, my result is fairly trivial. On the other end of the spectrum, \citet{wang-liu}, \citet{dastidar}, and \citet{n-h} study general scoring auctions with more general cost structures and multidimensional types. As a result, they are unable to describe equilibrium strategies very nicely. My model considers a middle-ground setting with general auctions rules but simple, additively separable cost structures. This allows for a tractable but nontrivial characterization of strategies. Additionally, this paper studies evaluates scoring auctions on a criterion distinct from the revenue/payoff comparison that is the focus of these other papers.

\section{Model}\label{sec:model}
\subsection{The Designer}
A buyer runs a procurement auction in which she cares about both the quality of a good she receives, $q\in [0,1]$, and the price she has to pay for it, $p$. To obtain this good, she runs a type of auction called a first-score auction.\footnote{These auctions are generally conducted in a sealed-bid static format. In the United States, this format is required by the Federal Code of Regulations. See \citet{fed-code}.} To define this, I first define a scoring rule: 

\begin{definition}
A scoring rule $\Phi$ is a smooth function from $\mathcal C:=\mathbb R_+\times [0,1]\to\mathbb R$ that maps $(p,q)\mapsto \Phi(p,q)$ and is strictly increasing in $q$ and strictly decreasing in $p$. 
\end{definition}

\begin{definition}\label{def:fsa}
A \textbf{first-score auction} is defined as follows:
\begin{enumerate}
    \item The buyer announces a \textbf{scoring rule} $\Phi$.
    \item The participants submit bids which are proposed contracts of asking price and good quality, $(p,q)$.
    \item The winner is the participant who submitted the highest-scoring bid, evaluated by $\Phi$. He is asked to supply the contract that he bid. 
\end{enumerate}
\end{definition}

First-score auctions are the generalization of first-price auctions to multidimensional settings; a scoring rule is used to map multidimensional contracts into one-dimensional values called scores and then an analogue of a first-price auction is run on the scores. These auctions are the most common allocation mechanism used in this type of procurement. Examples of scoring rules used in practice are the following:\footnote{See \citet{lundberg} and \citet{n-h} for more details on where these rules are used and of other scoring rules used in practice.} 
     \begin{enumerate}
     \item \textbf{quasilinear} scoring rules --- $\Phi(p,q)=\phi(q)-p$ for some concave function $\phi:[0,1]\to\mathbb R$. 
     \item \textbf{price-quality ratio (PQR)} scoring rule --- $\Phi(p,q)=-\frac{p}{q}$,
     \item \textbf{quality discount (QD)} scoring rules --- $\Phi(p,q)=-p(\bar Q-q)$, for some $\bar Q>1$.
  
     \end{enumerate}

 In the first two examples, the buyer's choice of scoring rule reflects a cost-benefit and cost efficiency objectives, respectively. Though the third case which has no natural interpretation as a governmental priority, it is still comprehensible to participants: sellers will be ranked on their asking price with greater discounts given to them for greater provisions of quality. 

\subsection{Participants}
There are $2$ sellers in the auction, each with private information about his cost structure.\footnote{The restriction to $2$ participants is mostly for ease of exposition. I consider auctions with a general number of participants in the discussion.} I denote sellers' types by $(m,f)\in\Theta:=[\underline m,\bar m]\times [\underline f, \bar f]\subset \mathbb R^2_{++}$. The cost of producing the good with quality level $q\in [0,1]$ takes an additively separable form, equal to
\[m\cdot q^\eta + f\]
for some convexity parameter $\eta\in [1,\infty).$ In words, when $\eta=1$, the components of a participant's type represents his marginal cost for providing quality and fixed cost for providing the lowest-quality version of the good respectively. I denote a general element of $\Theta$ by $\theta$.

Both agents' types are drawn i.i.d. according to a type distribution $g_0\in\mathcal G$ where $\mathcal G$ is the space of strictly positive density functions over $\Theta$. Formally,
\[\mathcal G=\{g\in \mathcal L^1(\Theta,\mu):g>0, \int g(\tau)~ d\tau = 1\}\] 
where $\mu$ is the Lebesgue measure over $\Theta$. The agents do not know $g_0$ nor do they have a prior over $\mathcal G$. After learning their types and before participating in the auction, agents can acquire information about the type distribution. They can each privately learn coarse information about $g_0$, which comes in the form of moment realizations: 

\begin{definition}\label{def:moment}
Let $M:\mathcal G\to \mathbb R$ be defined by
            \[M(g)=\E_{\theta_{-i}\sim g}[\zeta(\theta_{-i})]\]
            for some bounded measurable function $\zeta:\Theta\to \mathbb R$. This function $M$ is called a \textbf{moment}. Adopting standard notation, I denote the space of these moments by $\mathcal G^*$.
\end{definition}

 Agents can choose any finite number of moments in $\mathcal G^*$ and learn their values at $g_0$. This means they can fully learn any population mean that they might want to estimate. In practice, any sampling and estimation process would only be able to estimate these means noisily. I abstract from this consideration and instead assume that sampling constraints restrict agents to learning only a finite number of means.

 After this information acquisition process, there will be many distributions whose moments match the realizations that an agent observed. Thus, because he he has no prior, his first-order (and higher-order) beliefs will not be fully determined. Each agent will form arbitrary beliefs that are admissible given what he has learned. I fully detail this belief formation process in Section \ref{sec:microfound} but summarize it at present. 
 
\subsection{Solution Concept (Informal)}\label{subsec:sol_concept}
The agents participate in a game with two stages. In the first, they acquire information about the type distribution and in the second, they participate in a scoring auction. The timing of the game can be summarized by the figure below:

\begin{figure}[H]
            \centering
            \begin{tikzpicture}[scale=1.5]
            \draw[line width=0.75mm,->] (0,0)--(10,0);
             \draw (0,0) -- (0,1) node[above]  {\begin{varwidth}{3cm}Agents learn their types\end{varwidth}};
             \draw (9/5,0) -- (9/5,-1) node[below,align=center]  {\begin{varwidth}{4cm}Scoring rule announced\end{varwidth}};
             \draw (18/5,0) -- (18/5,1) node[above,align=center] {\begin{varwidth}{4cm} Agents choose moments to learn\end{varwidth}};
             \draw (5.4,0) -- (5.4,-1) node[below,align=center] {\begin{varwidth}{4cm}Agents learn moment realizations at $g_0$\end{varwidth}};
             \draw (7.2,0) -- (7.2,1) node[above,align=center] {\begin{varwidth}{3.5cm}Agents form admissible beliefs \end{varwidth}};
             \draw (9,0) -- (9,-1) node[below,align=center] {\begin{varwidth}{3cm}Agents play in the auction\end{varwidth}};
            \end{tikzpicture}
        \end{figure}  
        
In general, the beliefs that agents form before the auction will meaningfully affect their strategies; under one admissible belief, they should bid a certain contract and under another, they should bid differently. This is worrying because agents will have no rational method of choosing between such beliefs. In other words, the coarseness of the learning process and imprecision of beliefs generates a source of strategic confusion that the agents in this model are unable to resolve. The primary solution concept of this paper is an equilibrium concept in which this confusion is not generated.

\begin{definition}[Informal]
A mechanism has a \textbf{Coarse Beliefs Equilibrium (CBE)} if for all types, at every realization of the moments they learn, there is an action that is optimal at every admissible belief.
\end{definition}

In general, an agent will have non-Bayesian uncertainty about the true type distribution and about what his opponent has learned. In a coarse beliefs equilibrium, this uncertainty is strategically irrelevant. Coarse information about the type distribution and the understanding that his opponent is learning and bidding optimally are sufficient for forming best responses. 

The goal of this and the following section are to characterize scoring auctions that admit coarse beliefs equilibria. Most of the notation needed to formally discuss belief structures and to define coarse beliefs equilibria are not needed for this purpose. Thus, a rigorous treatment of this solution concept is provided later in the paper in Section \ref{sec:microfound}. We can characterize CBE without formally discussing beliefs due to the main result of Section \ref{sec:microfound}. This result relates CBE existence to equilibria of analogous games in which both agents share the common prior $g$: 

\noindent\textbf{Theorem \ref{thm:CBE_informal}} (Informal)\textbf{.}\textit{
A coarse beliefs equilibrium exists if and only if, for all $g\in\mathcal G$, Bayes-Nash equilibrium strategies in the common prior game at $g$ can be expressed with coarse information.}

Games with common priors are simpler to study than this two-stage game with information acquisition. Thus, we shift attention to common prior games and their equilibria for the remainder of Sections \ref{sec:model} and \ref{sec:results}.

\subsection{CBE Through Common Prior Games}
Assuming agents share a common prior $g\in\mathcal G$, I look for a strategy $(p^*,q^*):\Theta \times \mathcal G\to\mathcal C$ which is a contract to play based on a type and the type distribution such that:

\[(p^*,q^*)(\theta,g)\in \argsup_{p,q\in\mathcal C} \mathbb P_{\theta_{-i}~g}[\Phi(p,q)\ge (p^*,q^*)(\theta_{-i},g)]\cdot (p-m\cdot q^\eta-f)\] 

Note the dependence of strategies on the type distribution $g$ in this notation. In standard mechanism design settings, this distribution is fixed and so dependence of strategies on it is suppressed. We would like to consider settings in which agents can play symmetric equilibrium strategies without fully knowing the type distribution at all $g\in\mathcal G$. 

To guarantee a symmetric equilibrium exists, I place the following assumptions on the set of scoring rules the buyer considers:

\begin{assumption}[Regularity]\label{assump_scoring}
Fix a scoring rule $\Phi$. Let $u(s,\theta)$ be the indirect utility type $\theta$ receives from bidding score $s$. Let $P(s,q)$ be the price needed to obtain a score $s$ when bidding a quality $q$ under this scoring rule.
\begin{enumerate}[(a)]
    \item For all $s$, the function $\frac{\partial}{\partial m}\left(\frac{u(s,\theta)}{u_1(s,\theta)}\right)$ is continuous in $\theta$ and is strictly negative.
    \item For all $s\in \text{Im } \Phi$, we have that $P(s,q)<\infty$ for all $q\in[0,1]$. Further, we have that $P(s,q)$ is strictly convex in $q$, with smooth, strictly increasing derivative. Additionally, $P_q(s,0)<-\bar m$ and $P_q(s,1)>-\underline m$.
\end{enumerate}
\end{assumption}

These conditions, adapted from \citet{n-h}, guarantee the existence and uniqueness of a symmetric Bayes-Nash equilibrium in the game in which the type distribution $g$ is common knowledge. Assumption \ref{assump_scoring}(a) guarantees a form of single-crossing holds in this multidimensional setting and that pure-strategy equilibria will always exist as a  result. Further, it tells us the unique distribution of scores bid in equilibrium has a density function, denoted $h_g(s)$. Assumption \ref{assump_scoring}(b)  guarantees that any type's optimal contract bid is unique. 

Theorem \ref{thm:CBE_informal} tells us that a CBE exists when these optimal contract bids can be expressed with a finite number of moments:

\begin{definition}\label{def:detail_dependence}
            An agent plays an \textbf{n-moment strategy} if for all types $\theta\in\Theta$ and distributions $g\in \mathcal G$, his strategy $(p_i,q_i)(\theta,g)$ can be expressed as
            \[(p_i,q_i)(\theta,g)=\varphi_\theta( M^1_\theta(g),...,M^n_\theta(g)).\]
            for a finite sequence of moments $M^1_\theta,...M^n_\theta\in\mathcal G^*$
            and a function $\varphi_\theta:\mathbb R^{n}\to \mathbb R_+\times [0,1]$.
\end{definition}

\noindent\textbf{Corollary \ref{cor:fmit_nmoments}} (Informal)\textbf{.}\textit{
A first-score auction has a coarse beliefs equilibrium with if and only if its common prior equilibria can be implemented in $n$-moment strategies for some $n<\infty$. If this holds, we say that it can be \textbf{implemented in coarse beliefs}.}

 From this characterization, it is easy to see that popular auction formats admit coarse beliefs equilibria. For example, consider a first-price auction between $2$ participants. They have private values $v\in[0,1]$ drawn i.i.d from a distribution with density $g\in\mathcal G$, have quasilinear utility functions, and are risk-neutral expected profit maximizers. Bayes-Nash equilibrium strategies for both players are to submit bids 
     $b^*:[0,1]\times \mathcal G$ defined by \begin{equation}\label{ex:fpa_start}b^*(v,g)=\frac{\int x\cdot \mathbb I_{x\le v}~g(x)dx}{\int \mathbb I_{x\le v}~g(x)dx}.\end{equation} 
These strategies are $2$-moment strategies as the numerator and denominator on the right-hand side of equation \ref{ex:fpa_start} are both moments. Second-price and all-pay auctions in which agents possess one-dimension of private information also admit equilibria with $n$-moment strategies for finite $n$. This suggests that agents' information about their environment does not need to be too precise to form adequate best-responses. This will not hold generically in multidimensional settings. 

Note that by allowing for any finite number of moments, the definition of a CBE is quite permissive. This will turn out to be of little consequence when the condition is satisfied; if a first-score auction admits a coarse beliefs equilibrium under this definition, it will also admit one under more restrictive definitions. In the negative case, when an auction does not admit a coarse beliefs equilibrium, agents really must have full knowledge of the type distribution to be able to strategize appropriately. In the latter cases, agents need full \textit{informational precision} about the type distribution to strategize while in the former, they can do so with lower precision. 

For the sake of tractability, I make the following assumptions about strategies:
\begin{assumption}[Differentiability]\label{assump:diff}
\begin{enumerate}[(a)]
\item For all types $\theta\in\Theta,$ we have that $\varphi_\theta$ is differentiable.
\item Let $s^*=\Phi\circ (p^*,q^*):\Theta\times \mathcal G\to \mathbb R$ be the score bidding strategy in equilibrium. Then for all types $\theta\in\Theta$ and $g\in\mathcal G$, we have that $s^*(\theta,g)$ is differentiable with respect to fixed costs.
\end{enumerate}
\end{assumption}

\section{Characterization}\label{sec:results}
In this section, I characterize the first-score auction formats that admit CBE's and discuss properties of those equilibria. My main result shows there are only two classes of first-score auctions: 

\begin{theorem}\label{thm:main_theorem}
    A first-score auction either can be implemented in $2$-moment strategies or it cannot be implemented in $n$-moment strategies for any $n<\infty$.
\end{theorem}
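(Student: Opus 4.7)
The plan is to prove the dichotomy by establishing the contrapositive of its non-trivial direction: if the first-score auction can be implemented in $n$-moment strategies for some finite $n$, then it can in fact be implemented in $2$-moment strategies. The reverse implication is trivial because $2$-moment strategies are a special case of $n$-moment strategies. I would route the argument through the equilibrium invariance property and the payoff-equivalence result (Theorem 2) referenced in the introduction, exploiting the fact that the explicit equilibrium of a second-score auction requires knowing only a conditional expectation of opponent scores.

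The first step is to show that $n$-moment implementability forces equilibrium invariance. Suppose for contradiction the ex post allocation is not invariant, so there exist types $\theta, \theta' \in \Theta$ and distributions $g_0, g_1 \in \mathcal{G}$ with $\theta$ beating $\theta'$ under $g_0$ but losing under $g_1$. Connecting $g_0$ and $g_1$ by a smooth one-parameter family $\{g_t\}$ in $\mathcal{G}$ and invoking the differentiability of $s^*$ in Assumption \ref{assump:diff}, the intermediate value theorem produces some $g_{t^\star}$ at which the scores tie: $s^*(\theta, g_{t^\star}) = s^*(\theta', g_{t^\star})$. I would then construct a perturbation of $g_{t^\star}$ that preserves the values of the finitely many moments $\{M_\theta^k\}_{k=1}^{n}$ entering $\theta$'s strategy but strictly increases $s^*(\theta', g)$ relative to $s^*(\theta, g)$. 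Such a perturbation exists because $\{M_\theta^k\}_{k=1}^n \cup \{M_{\theta'}^j\}_{j=1}^n$ spans only a finite-dimensional subspace of $\mathcal{G}^*$, while the space of admissible tangent perturbations of $g_{t^\star}$ inside $\mathcal{G}$ is infinite-dimensional. The resulting asymmetric strategic response cannot be reconciled with a well-defined $\varphi_\theta$ that is constant along the perturbation, yielding the contradiction.

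Once equilibrium invariance is in hand, Theorem 2 delivers payoff equivalence between the first-score and ``second-score'' auctions. Because the second-score auction is strategy-proof in scores under equilibrium invariance, each type bids a score $\sigma(\theta)$ depending only on $\theta$ (not on $g$). Standard payoff-equivalence arithmetic then represents the first-score equilibrium score as the interim expected second-highest score:
\begin{equation*}
s^*(\theta, g) \;=\; \frac{\int \sigma(\theta')\, \mathbb{I}_{\sigma(\theta') \le \sigma(\theta)}\, g(\theta')\, d\theta'}{\int \mathbb{I}_{\sigma(\theta') \le \sigma(\theta)}\, g(\theta')\, d\theta'},
\end{equation*}
a ratio of two moments of $g$ with $\theta$-indexed integrand. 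Assumption \ref{assump_scoring}(b) pins down the optimal quality bid given any score purely from type-level cost minimization against $\Phi$, independent of $g$, so the full contract bid $(p^*, q^*)(\theta, g)$ depends on $g$ only through these two moments, verifying $2$-moment implementability.

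The main obstacle will be the first step: translating a finite-moment functional hypothesis into the combinatorial invariance property. The delicate point is that the moments $M_\theta^k$ may themselves depend on $\theta$, so I cannot rely on a single uniform perturbation; I need a perturbation that simultaneously preserves the moments that control $\theta$'s strategy and changes the moments that control $\theta'$'s. I expect the argument to rely on the smoothness provided by Assumption \ref{assump:diff}, together with a density/dimension-counting argument showing that perturbations orthogonal to any fixed finite-dimensional subspace of $\mathcal{G}^*$ form a rich cone inside the tangent space of $\mathcal{G}$, with care taken that perturbations remain in the strictly positive density cone.
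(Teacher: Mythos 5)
Your architecture is the same as the paper's (finite-moment implementability forces equilibrium invariance, and invariance yields $2$-moment strategies; the converse inclusion is trivial), but the first step is where all the work lies and your sketch does not close it. After locating a tie point $g_{t^\star}$ and a perturbation that annihilates the finitely many moment kernels entering $\theta$'s strategy while moving $\theta'$'s bid, you have no contradiction yet: under the hypothesis of non-invariance a reversal of the order between $\theta$ and $\theta'$ is perfectly admissible, and the $n$-moment representation only asserts that $\theta$'s bid is unchanged --- which is exactly what your perturbation was built to permit. To contradict the representation you must show that $\theta$'s \emph{true} best response has to change under some moment-annihilating perturbation. That requires (i) the characterization of how $\theta$'s bid is pinned down by the conditional distribution over beaten opponents (Proposition \ref{prop:fsa_strat}), where the relevant objects --- the equivalence classes and the projection $\rho_m$ --- are themselves endogenous and move with the perturbation (the paper controls this with Lemma \ref{lem:curve_fixed} and the integration-over-$m$ argument in the proof of Lemma \ref{lem:moment_match}(b)); and (ii) a reason why $n$ \emph{fixed} kernels $\zeta^1_\theta,\dots,\zeta^n_\theta$ cannot track the winning sets as they vary with $g$. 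The paper obtains (ii) not at a single tie point but from $n+2$ distributions along the path between $g_0$ and $g_1$: Lemma \ref{lem:moment_match} forces the derivative functional $\psi_{\theta,g_{\lambda_i}}$ to exhibit $n+2$ linearly independent patterns on suitably chosen competitive types, contradicting the fact that each $\psi_{\theta,g}$ lies in the $n$-dimensional span of the kernels (the rank argument in the proof of Proposition \ref{prop:converse}). Your dimension count --- directions orthogonal to a finite-dimensional subspace of $\mathcal G^*$ are plentiful --- only delivers existence of moment-preserving perturbations, not that any of them changes the equilibrium; and a single distribution can always have its structure matched by well-chosen moments (the invariant case shows this), so a one-point argument cannot produce the contradiction.

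A secondary issue: in your second step the displayed identity is not correct. Under equilibrium invariance, Proposition \ref{prop:fsa_strat} gives $s^*(\theta,g)=s_{BE}\bigl(m,f_{(2)}\bigr)$ where $f_{(2)}$ is the conditional expectation of the projected fixed cost $\rho_m(\tau)$ over beaten types; this is \emph{not} the conditional expectation of break-even scores, because $s_{BE}(m,\cdot)$ is nonlinear in general (indeed, Proposition \ref{prop:fsa_linear} turns precisely on when expectations commute with the break-even map). Payoff equivalence equates interim utilities and efforts, not the equilibrium score with the expected second-highest score. The $2$-moment conclusion still holds --- once the order is $g$-independent, the numerator and denominator defining $f_{(2)}$ are genuine moments and $s_{BE}$, together with the $g$-independent optimal quality from Assumption \ref{assump_scoring}(b), is a smooth function of them --- but that is the paper's corollary to Proposition \ref{prop:fsa_strat}; Theorem \ref{thm:rev_equiv} is not needed for this direction.
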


 I will sometimes refer to auctions in the former case and their associated strategies as informationally coarse and those in the latter case as fully precise. In Section \ref{sub:fsa_strat}, I characterize equilibrium strategies in a first-score auction for general scoring rules and define a property called \textit{equilibrium invariance} that guarantees an equilibrium in informationally coarse strategies. In Section \ref{sub:diff}, I show that any mechanism that admits a CBE must satisfy equilibrium invariance. In section \ref{sub:ei_conds}, I provide a simple condition on scoring rules to determine if the corresponding first-score auctions are implementable in $2$-moment strategies.

\subsection{First-Score Strategizing}\label{sub:fsa_strat}
We will start by understanding equilibrium strategies in general first-score auctions. To discuss these strategies, I first define second-score auctions, the generalization of second-price auctions to this multidimensional setting. After this, I define an equivalence relation on the set of types, calling two types equivalent if they bid the same scores in equilibrium. With these two components, I then characterize Bayes-Nash strategies. 

In one dimension, studying second-price auctions helps us understand strategies in a first-price auction. An analogous result will hold here.

\begin{definition}
A \textbf{second-score auction} is defined as follows:
\begin{enumerate}
    \item The buyer announces a \textbf{scoring rule} $\Phi$.
    \item The participants submit bids which are proposed contracts of asking price and quality, $(p,q)$.
    \item The winner is the participant who submitted the highest-scoring bid, evaluated by $\Phi$. He is asked to supply a contract whose score equals that of the second-highest bid.
\end{enumerate}
\end{definition}

We see second-score auctions far less frequently than first-score auctions in practice. Nevertheless, these auctions remain useful as analytical tools. It is important to note that, in a second-score auction, the winner does not need to provide the second-highest scoring contract that was submitted; they can provide a contract that is optimal for them given that it attains the requisite score. This means that if two types have very different cost structures---e.g. one (the other) has a high (low) marginal cost and  has a low (high) fixed cost)---the distance between their types and the difference between their resulting bids does not penalize either of them. Because of this feature, the incentives in a second-score auction resemble those of a second-price auction. As a result, their equilibria have similar properties:

\begin{proposition}(\cite{n-h})
A second-score auction with scoring rule $\Phi$ that satisfies Assumption \ref{assump_scoring} has a unique equilibrium in weakly dominant strategies. These strategies, $(p_{BE},q_{BE}):\Theta\to \mathcal C$ represent the highest-scoring contracts a type can provide while still obtaining a weakly positive profit.
\end{proposition}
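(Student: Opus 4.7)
The plan is to reduce the second-score auction to an analogue of the second-price auction on scores and then apply the standard weak-dominance argument.

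First, I would construct the analogue of a ``value'' for each type. For each type $\theta \in \Theta$ and each feasible score $s$ in the range of $\Phi$, define
\[
V(s,\theta) = \sup\{\, p - m q^\eta - f : (p,q)\in\mathcal C, \; \Phi(p,q) = s\,\}.
\]
This is the maximum profit type $\theta$ can earn if forced to supply a contract scoring $s$. Assumption~\ref{assump_scoring}(b) (strict convexity of $P(s,\cdot)$ together with the boundary conditions on $P_q$) guarantees that the supremum is attained at a unique interior quality $q^*(s,\theta)$, yielding a unique profit-maximizing contract at each score. An envelope argument then shows $V(s,\theta)$ is smooth in $s$ and strictly decreasing in $s$: raising the required score at fixed quality strictly tightens the price constraint in a profit-reducing direction, and this carries through to the maximum. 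I would then define the break-even score $s_{BE}(\theta)$ as the unique solution to $V(s_{BE}(\theta),\theta)=0$, with existence via the intermediate value theorem (after checking $V$ takes both positive and non-positive values over the feasible range) and uniqueness via strict monotonicity. Let $(p_{BE}(\theta), q_{BE}(\theta))$ denote the unique contract attaining this score with non-negative profit.

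Second, I would run the standard weak-dominance argument. The key structural observation is that in a two-bidder second-score auction, the winning bidder's realized payoff depends only on the opposing score $s_{-i}$, not on the particular contract $(p_i,q_i)$ that the bidder submitted; the bid-contract merely serves to declare a score, and the winner is then free to supply the best contract scoring $s_{-i}$. Bidder $i$'s ex-post payoff is therefore $V(s_{-i},\theta_i)$ if $s_i > s_{-i}$ and $0$ if $s_i < s_{-i}$, so winning is desirable precisely when $s_{-i} < s_{BE}(\theta_i)$. Now consider any alternative bid with score $s \neq s_{BE}(\theta_i)$. If $s > s_{BE}(\theta_i)$, then for $s_{-i}$ in the positive-measure set $(s_{BE}(\theta_i), s)$, bidding $s$ wins with strictly negative payoff $V(s_{-i},\theta_i) < 0$ while $(p_{BE},q_{BE})(\theta_i)$ loses with zero payoff; if $s < s_{BE}(\theta_i)$, then for $s_{-i} \in (s, s_{BE}(\theta_i))$ the comparison is reversed. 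For all other realizations of $s_{-i}$, the two strategies yield identical payoffs. Hence bidding $(p_{BE},q_{BE})(\theta)$ weakly dominates every other bid, and every alternative is strictly dominated on an event of positive probability, delivering both existence and uniqueness.

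The main obstacle is pinning down the ``decoupling'' that makes the reduction to a scalar strategic dimension clean in a multidimensional contract space. The bidder's submitted contract enters the outcome only through its score, because the winner supplies a new (optimal) contract keyed to the opponent's score rather than delivering their own bid; this collapses the strategic problem to one dimension per bidder. Once this is in place, the remaining work is largely routine: verifying $V(\cdot,\theta)$ is well-defined, strictly decreasing, and crosses zero in the interior using Assumption~\ref{assump_scoring}(b), and then reading off the second-price-style weak dominance of the break-even score.
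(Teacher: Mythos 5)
The paper never proves this proposition itself---it is imported from \cite{n-h}---so there is no in-paper argument to compare against. Judged on its own terms, your reduction is the standard and correct route for existence: because the winner may supply \emph{any} contract achieving the opponent's score, the submitted contract matters only through its score, the winner's realized payoff is $V(s_{-i},\theta)$, and bidding the break-even score is weakly dominant by the usual second-price logic. This is consistent with how the paper uses the result (see the discussion following the definition of a second-score auction, and Lemma \ref{lem:marg_score}, where the break-even score is obtained directly as $\max_{(p,q)}\Phi(p,q)$ subject to $p=m q^\eta+f$ over a compact constraint set---a cleaner existence argument than your IVT detour through checking that $V$ changes sign over the feasible range).

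The genuine gap is in the uniqueness step. Your closing claim that ``every alternative is strictly dominated on an event of positive probability'' fails for alternative bids that carry the \emph{same} score as the break-even contract but a different $(p,q)$: precisely because the winner re-optimizes at the opponent's score, two bids with equal scores induce identical outcomes against every possible opponent bid, so such alternatives are also weakly dominant and are nowhere strictly worse. Your argument therefore uniquely pins down the equilibrium score bid $s_{BE}(\theta)$, but not the contract strategy $(p_{BE},q_{BE})$ asserted in the proposition; closing that step requires something beyond payoff comparisons under the rules as stated here (e.g., the exact formulation or tie-breaking convention in \cite{n-h}, a requirement that the winner be prepared to deliver the contract he actually submitted, or an explicit restatement of uniqueness at the level of scores/outcomes). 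A smaller point: ``positive measure/positive probability'' is misplaced in a dominance argument---weak dominance needs a strict improvement against \emph{some} opposing bid, not positive probability under a distribution; existence of an opposing score in $(s_{BE}(\theta_i),s)$ follows since $\mathcal C$ is connected and $\Phi$ continuous, and that is all you need.
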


Because scoring rules are strictly increasing in price, a type will obtain exactly zero profit from the contract he bids in a second-score auction. For this reason, I refer to these bids as break-even (BE) contracts. As discussed earlier, second-score auctions and break-even contracts will be useful for describing strategies in first-score auctions, as in the one-dimensional case. However, unlike in the one-dimensional case, this result is not due to revenue equivalence. In general, revenue equivalence will not hold in this setting because the equilibria in the two auctions do not always have the same allocation rules. To formalize this, I define a notion of an equilibrium order over the set of types.

\begin{definition}
At the type distribution $g$, an equilibrium of a scoring auction \textbf{implements the order} $\succeq_g$ over $\Theta$ where 
\[s^*(\theta_1,g)\ge s^*(\theta_2,g) \Longleftrightarrow \theta_1 \succeq_g \theta_2.\]
\end{definition}

In words, one type is greater than another at a distribution $g$ if, in the equilibrium at that distribution, the former type wins the contract over the latter. The relation $\succeq_g$ is a total order as the transitivity and completeness of the order used to compare scores extend to the order $\succeq_g$. I sometimes use the phrases equilibrium structure or equilibrium order to refer to the type ordering defined by $\succeq_g$. In a second-score auction, this equilibrium order is the same for all $g\in\mathcal G$. This is generically not be the case in a first-score auction. 

This order induces an equivalence relation over the set of types that bid the same score. I denote this relation by $\sim_g$. The equivalence classes of this relation are analagous to one-dimensional ``pseudotypes'' discussed in the scoring auction literature.\footnote{ \citet{ac08} discuss pseudotypes in quasilinear scoring auctions while \citet{n-h} define pseudotypes in general first-score settings.} As in those contexts, a distribution over $1$-dimensional psuedotypes rather than that over $2$-dimensional types is all that is necessary for describing strategies.

\begin{proposition}\label{prop:fsa_strat}
Fix $g\in \mathcal G$. For all $(m,f)\in \Theta$, this type's equilibrium strategy can be represented as 
                \[(p^*,q^*)(m,f,g)=(p_{BE},q_{BE})(m,f_{(2)})\]
where
                \begin{equation}\label{eq:f2}f_{(2)}=\frac{\int z\cdot \mathbb I_{z\ge f}~ g_m([(m,z)]_g) dz}{\int \mathbb I_{z\ge f} ~g_m([(m,z)]_g) dz}.\end{equation}
Here, $g_m([(m,z)]_g)$ refers to the pushforward density of $g$ onto the set of equivalence classes when parametrized by the types with marginal cost $m$.
\end{proposition}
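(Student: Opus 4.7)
The plan is to exploit the additive separability of costs to reduce the multidimensional bidding problem to a family of one-dimensional problems indexed by marginal cost $m$, and then apply the envelope theorem within each slice to derive a conditional-expectation formula analogous to the one obtained in a classical one-dimensional first-price auction.

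First I would fix a marginal cost $m$ and observe that, because costs take the form $mq^\eta + f$, the optimal quality-price pair used to attain any target score $s$ depends only on $s$ and $m$, not on $f$. Writing $\tilde{\pi}(s,m)$ for the corresponding gross indirect utility, the ex ante equilibrium payoff of type $(m,f)$ is
$$U(m,f) = H_g(s^*(m,f,g)) \cdot \big[\tilde{\pi}(s^*(m,f,g),m) - f\big],$$
where $H_g$ is the equilibrium cdf of opponent scores. Because $f$ enters only as a linear offset, the envelope theorem gives $\partial_f U(m,f) = -H_g(s^*(m,f,g))$. Letting $\bar f(m)$ denote the supremum of the $m$-slice support and invoking the boundary condition $U(m,\bar f(m)) = 0$ (the weakest $m$-representative wins with probability zero), integration yields
$$U(m,f) = \int_f^{\bar f(m)} H_g(s^*(m,z,g))\, dz.$$
Equating the two expressions for $U(m,f)$ and solving produces
$$\tilde{\pi}(s^*(m,f,g),m) = f + \frac{\int_f^{\bar f(m)} H_g(s^*(m,z,g))\, dz}{H_g(s^*(m,f,g))} =: f_{(2)}.$$

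Next I would interpret this via the break-even contract. By definition, $(p_{BE},q_{BE})(m,f_{(2)})$ achieves the score $s$ satisfying $\tilde{\pi}(s,m) = f_{(2)}$, and within that target score it is the $(p,q)$ pair maximizing $p - mq^\eta$. Since this inner optimization depends only on $m$ and $s$, the first-score bid $(p^*,q^*)(m,f,g)$ and the break-even bid $(p_{BE},q_{BE})(m,f_{(2)})$ coincide whenever they target the same score, which the display above establishes. To match the conditional-expectation formula in the proposition, I would then use the pushforward identity $H_g(s^*(m,z,g)) = \int_z^{\bar f(m)} g_m([(m,w)]_g)\, dw$ together with integration by parts to convert the numerator of $f_{(2)}$ into $\int z \cdot \mathbb I_{z \ge f}\, g_m([(m,z)]_g)\, dz$ and the denominator into $\int \mathbb I_{z \ge f}\, g_m([(m,z)]_g)\, dz$, exactly as in equation \eqref{eq:f2}.

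The hardest step is justifying the pushforward identity $H_g(s^*(m,z,g)) = \int_z^{\bar f(m)} g_m([(m,w)]_g)\, dw$. This requires every equivalence class in the support of $g$ to admit a representative in the $m$-slice, so that the $m$-parametrization captures the full opponent mass. I would deduce this from Assumption~\ref{assump_scoring}(a): the strict negativity and continuity of $\partial_m(u/u_1)$ imply that within each pseudotype the fixed cost $f$ can be continuously traded off against marginal cost, so the $m$-slice intersects every equivalence class. The same regularity also delivers the boundary condition $H_g(s^*(m,\bar f(m),g)) = 0$ invoked above. Everything else reduces to a routine envelope calculation.
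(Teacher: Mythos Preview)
Your argument is correct and reaches the same formula, but the technical route differs from the paper's. The paper explicitly constructs a one-dimensional first-price procurement auction on the $m$-slice by setting $G_m(f):=H(s^*(m,f,g))$ as the pushforward distribution, and then verifies that the first-order condition of the original scoring auction, after the substitution $b(f)=p^*(m,f,g)-m\,q^*(m,f,g)^\eta$, coincides literally with the differential equation governing that one-dimensional auction; it then invokes the known closed-form solution of the latter. Your approach instead applies the envelope theorem directly to the scoring auction's indirect utility and integrates---essentially a Myerson-style payoff-equivalence computation within each $m$-slice---and the integration-by-parts step at the end is the work the paper outsources to ``standard results on symmetric first-price auctions.'' The paper's route makes the economic reduction to a one-dimensional auction more transparent and sidesteps separate justification of the boundary condition, which is absorbed into the one-dimensional auction's initial condition; your route is more self-contained. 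Both rest on the same core observation that, conditional on the target score, the optimal contract depends on $m$ alone. The paper also isolates as a separate preparatory lemma, via an intermediate-value argument on the range of $s_{BE}(m,\cdot)$ over the extended type space, the fact that the equilibrium bid is the break-even bid of some type $(m,f_{(2)})$; you derive this inline from the explicit formula $\tilde\pi(s^*,m)=f_{(2)}$, which is cleaner but does lean on the surjectivity claim you flag as the hardest step.
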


 Proposition \ref{prop:fsa_strat} reduces strategizing in a first-score auction into choosing a type to imitate and then simply bidding that type's break-even contract. The type to mimic can be interpreted as the expectation of the second-highest bidder's type under the pushforward distribution. This decomposition of first-score auction generalizes strategies used in one-dimensional, first-price settings. In those auctions, the Bayes-Nash strategy is to bid the break-even bid of a different type, where this break-even bid is simply that type's value. The type to imitate can be expressed as an expected conditional order statistic in Equation \ref{eq:f2}  which resembles that in Equation \ref{ex:fpa_start}. 

There are two key observations at work in proving Proposition \ref{prop:fsa_strat}. The first results from the additive separability in firms' cost structures: conditional on bidding a score $s$, agents with the same marginal cost will pick the same contract because their fixed cost does not effect their maximization problem. Mathematically, if $h_g(s)$ is the density of the score distribution in equilibrium,
\[\argmax_{(p,q)\in C(s)} h_g(s)\cdot (p-m\cdot q^\eta - f_1)=\argmax_{(p,q)\in C(s)} h_g(s)\cdot (p-m\cdot q^\eta - f_2).\]
where \[C(s)=\{(p,q)\in\mathcal C:\Phi(p,q)=s\}.\]
This immediately allows the description of a general bid as some other type's break-even bid.
Deriving an expression for this other type requires an additional observation, illustrated in Figure \ref{fig:fsa_strat}. Intuitively, if types $\theta_1$ and $\theta_2$ are equivalent, then from the strategizing perspective of one agent, it does not matter whether their opponent has type $\theta_1$ or $\theta_2$. More formally, equilibrium strategies best-respond to an equilibrium distribution of scores, $H_g(s)$. Moving mass between types in the same equivalence class, then, does not change any type's equilibrium strategies because the equilibrium score distribution remains unchanged. Thus, to study strategies, it is without loss of generality to move all mass in an equivalence class to a type that has some fixed marginal cost and then consider the resulting auction in which an agent's only private information is his fixed cost.

\begin{figure}
\centering
\begin{subfigure}{0.45\textwidth}
    \begin{tikzpicture}
    \draw[<->,thick] (0,5) node[left]{$f$}-- (0,0) -- (5,0) node[below]{$m$};
    \draw [blue2,thick,domain=215:260] plot ({5.25+6*cos(\x)}, {6+6*sin(\x)});
   \draw [black,thick,domain=215:260] plot ({5.25+6*cos(\x)}, {7+6*sin(\x)});
   \draw [red2,thick,domain=215:260] plot ({5.25+6*cos(\x)}, {8+6*sin(\x)});
   \draw [black,thick, dashed] (2,5) -- (2,-.2) node[below]{$m_0$};
   \fill[fill=blue2] (2,.95) circle (2pt);
   \fill[fill=black] (2,1.95) circle (2pt);
   \fill[fill=red2] (2,2.95) circle (2pt);
\end{tikzpicture}
\end{subfigure}
\begin{subfigure}{0.45\textwidth}
    \begin{tikzpicture}
    \draw[<->,thick] (0,5) node[left]{$f$}-- (0,0) -- (5,0) node[below]{$m$};
    \draw [blue2,opacity=0.2,domain=215:260] plot ({5.25+6*cos(\x)}, {6+6*sin(\x)});
   \draw [black,opacity=0.2,domain=215:260] plot ({5.25+6*cos(\x)}, {7+6*sin(\x)});
   \draw [red2,opacity=0.2,domain=215:260] plot ({5.25+6*cos(\x)}, {8+6*sin(\x)});
   \draw [black,thick, dashed] (2,5) -- (2,-.2) node[below]{$m_0$};
   \draw[black,line width=0.75mm,->] (2.65,4.5)--(2.15,4.5);
        \draw[black,line width = 0.75mm,->] (1.35,4.5)--(1.85,4.5);
        \fill[fill=blue2] (2,.95) circle (4pt);
   \fill[fill=black] (2,1.95) circle (4pt);
   \fill[fill=red2] (2,2.95) circle (4pt);
\end{tikzpicture}
\end{subfigure}
\caption{On the left are three equivalence classes of types in an equilibrium. To any type of agent, the distribution of density within these equivalence classes does not matter. He only cares about the aggregate density of types in that class who will bid the same score. Thus, it is without loss to consider a setting in which all of this mass is pushed to one point, as shown on the right. This then reduces the problem to a one-dimensional first-price auction.}\label{fig:fsa_strat}
\end{figure}
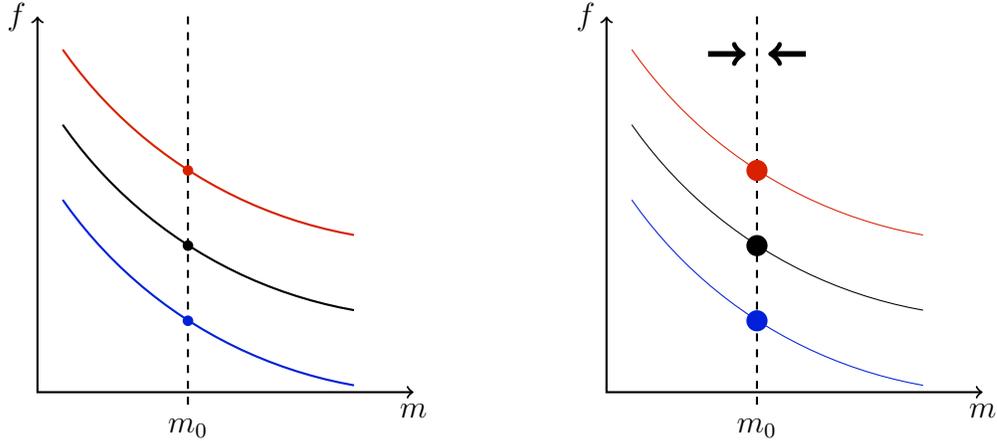

\pagebreak

Proposition \ref{prop:fsa_strat} will be useful in studying CBE generally, but it is immediately applicable in studying equilibria with $2$-moment strategies. Under Assumption \ref{assump_scoring}, $(p_{BE},q_{BE})$ is a differentiable function. Thus, if each of the integrals in Equation \ref{eq:f2} is a moment, then Proposition \ref{prop:fsa_strat} would immediately yield $2$-moment strategies. The problem is that the equivalence classes under $\sim_g$ are equilibrium objects that, in general, change with the type distribution. In the cases where they do not, only $2$ moments will be needed for strategizing. This is formalized below:

\begin{definition}\label{def:eq_inv}
A first-score auction satisfies \textbf{equilibrium invariance} if for all $\theta_1,\theta_2\in\Theta $, if $\theta_1$ is awarded the contract over $\theta_2$ in equilibrium at some distribution $g\in \mathcal G$ then $\theta_1$ is awarded the contract over $\theta_2$ at all distributions in $\mathcal G$. Mathematically, a first-score auction satisfies equilibrium invariance if for all $g_1,g_2\in\mathcal G$,
\[\theta_1\succeq_{g_1} \theta_2 \Longleftrightarrow \theta_1\succeq_{g_2} \theta_2.\]
\end{definition}

This property is naturally appealing; equilibrium invariance indicates that there is a concrete notion of one type being a ``stronger bidder than'' another. Viewing an allocation function as a choice rule, this property relates to an Independence of Irrelevant Alternatives condition in a social choice setting or a substitutability condition in a matching one because the designer's preference of one type over another does not depend on any other factors. Additionally, many common auction formats satisfy this property. When there is only one dimension of private information, there is a canonical total order over the set of types. Popular auction formats---like first-price, second-price, and all-pay auctions---implement allocation rules that are monotonic with respect to this order. This definition of monotonicity is independent of the type distribution so equilibrium invariance is satisfied. With multidimensional private information, there is no canonical total order over type spaces so this property is not readily implied by monotonicity requirements. Thus, multidimensional auctions satisfying equilibrium invariance generalize the nice structure of equilibria in one-dimensional settings. Returning to Proposition \ref{prop:fsa_strat}, equilibrium invariance implies that equivalence classes are independent of the type distribution and gives us the following corollary:

\begin{corollary}
    If an auction satisfies equilibrium invariance then it has an equilibrium in $2$-moment strategies.
\end{corollary}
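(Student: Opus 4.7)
The plan is to take Proposition \ref{prop:fsa_strat} as the main tool and show that, under equilibrium invariance, the expression for $f_{(2)}$ collapses into a ratio of two moments that are well-defined functions of $g$ alone.

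First I would invoke Proposition \ref{prop:fsa_strat} to fix the form of the equilibrium strategy of a type $(m,f)$ as $(p_{BE},q_{BE})(m, f_{(2)})$, where $f_{(2)}$ is given by Equation \ref{eq:f2}. Since $(p_{BE},q_{BE})$ is a deterministic function of its type argument that depends only on the scoring rule (not on $g$), and is differentiable in $f$ by Assumption \ref{assump_scoring}, the whole question reduces to showing that $f_{(2)}(m,f,g)$ is a ratio of two moments of $g$ whose form does not itself depend on $g$.

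The key step is to use equilibrium invariance to strip the $g$-dependence out of the equivalence classes appearing in Equation \ref{eq:f2}. By Definition \ref{def:eq_inv}, the relation $\sim_g$ does not vary with $g$, so I may drop the subscript and write $\sim$, together with a fixed family of equivalence classes $\{[(m,z)]\}$. For each marginal cost $m$, parametrize the quotient $\Theta/\!\sim$ by the $m$-cross-section (this is the parametrization implicit in the statement of Proposition \ref{prop:fsa_strat}), and let $z^m:\Theta\to[\underline f,\bar f]$ be the map sending $\theta$ to the fixed cost of the unique type with marginal cost $m$ equivalent to $\theta$. Because $\sim$ is $g$-independent, $z^m$ is a fixed bounded measurable function of $\theta$. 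A change of variables then rewrites
\[
\int \mathbb I_{z\ge f}\,g_m([(m,z)])\,dz \;=\; \mathbb E_{\theta\sim g}\!\left[\mathbb I_{z^m(\theta)\ge f}\right],\qquad
\int z\cdot \mathbb I_{z\ge f}\,g_m([(m,z)])\,dz \;=\; \mathbb E_{\theta\sim g}\!\left[z^m(\theta)\cdot \mathbb I_{z^m(\theta)\ge f}\right],
\]
so each of the numerator and denominator in Equation \ref{eq:f2} is a moment in the sense of Definition \ref{def:moment}, with integrand independent of $g$ (it depends only on $(m,f)$, i.e., on the bidder's own type).

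Call these two moments $M^1_{(m,f)}$ and $M^2_{(m,f)}$. Then
\[
(p^*,q^*)(m,f,g) \;=\; (p_{BE},q_{BE})\!\left(m,\; \tfrac{M^1_{(m,f)}(g)}{M^2_{(m,f)}(g)}\right) \;=:\; \varphi_{(m,f)}\!\left(M^1_{(m,f)}(g),\, M^2_{(m,f)}(g)\right),
\]
with $\varphi_{(m,f)}:\mathbb R^2\to\mathbb R_+\times[0,1]$ a differentiable function (division is smooth on the relevant range, where the denominator is positive because $g>0$, and $(p_{BE},q_{BE})$ is differentiable). This is exactly the form in Definition \ref{def:detail_dependence} with $n=2$, so the auction is implementable in $2$-moment strategies.

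The main obstacle I anticipate is the bookkeeping around the pushforward density $g_m([(m,z)]_g)$: one has to verify that the $m$-cross-section parametrization of the quotient is well-defined (that each equivalence class meets each $m$-slice in exactly one point under the regularity imposed in Section \ref{sub:fsa_strat}) and that $z^m$ is genuinely measurable and bounded. Once equilibrium invariance is in force, however, this structure is pinned down once and for all and the change of variables above is just a clean translation of pushforward integrals into ordinary expectations against $g$.
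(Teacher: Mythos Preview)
Your proposal is correct and follows essentially the same approach as the paper: your map $z^m$ is precisely the paper's projection $\rho_m$, and both arguments boil down to observing that equilibrium invariance makes the equivalence classes $g$-independent so that the pushforward integrals in Equation~\ref{eq:f2} can be rewritten as ordinary expectations against $g$. One small correction: the codomain of $z^m$ should be the extended interval $[\underline f,\overline f+\overline m]$ rather than $[\underline f,\overline f]$, since a type in $\Theta$ may be equivalent to a type $(m,f')$ with $f'>\overline f$ (the paper handles this by working on the extended type space $\overline\Theta$).
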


\subsection{Equilibrium Structure and Finite Moments}\label{sub:diff}
The previous section concluded with a sufficiency result for an informationally coarse strategies. The main result of this section is a stronger converse, showing that equilibrium invariance is not only necessary for obtaining a precision of $2$, it is necessary in any coarse beliefs equilibrium:

\begin{proposition}\label{prop:converse}
If a first-score auction admits a coarse beliefs equilibrium, then it satisfies equilibrium invariance.
\end{proposition}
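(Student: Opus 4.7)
The plan is to prove the contrapositive: given an equilibrium implementable in $n$-moment strategies, I show that the order $\succeq_g$ is the same for every $g\in\mathcal G$. Combining Proposition \ref{prop:fsa_strat} with Assumption \ref{assump_scoring}(b), which makes $(p_{BE},q_{BE})$ injective, the equilibrium is fully determined by the scalar function $f_{(2)}(\cdot,g)$ from equation \ref{eq:f2}, and this function inherits an $n$-moment representation from the strategies. By Assumption \ref{assump:diff}, this representation is differentiable in $g$ along admissible directions. The crux is to exploit the fact that the moments $M^i_\theta$ are linear functionals of $g$, so that the Gateaux derivative $D_g f_{(2)}|_{g}[\delta]$ is itself a continuous linear functional of $\delta$ that depends continuously on the linearization point $g$.

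Fix $g_1\in\mathcal G$. Call a signed measure $\delta$ on $\Theta$ with $\int\delta=0$ \emph{admissible at $g_1$} if $g_1+\lambda\delta\in\mathcal G$ for small $\lambda>0$, and \emph{$\sim_{g_1}$-preserving} if it transports mass only within $\sim_{g_1}$-equivalence classes. Equation \ref{eq:f2} shows that $f_{(2)}(\cdot,g)$ depends on $g$ only through the pushforward onto $\Theta/\sim_g$, so every $\sim_{g_1}$-preserving admissible $\delta$ lies in the kernel of $D_g f_{(2)}|_{g_1}$. Now suppose toward contradiction that equilibrium invariance fails along some path $g_\lambda = g_1+\lambda(g_2-g_1)$, so there exist types $\theta_1,\theta_2$ with $\theta_1\sim_{g_1}\theta_2$ but $\theta_1\not\sim_{g_{\lambda_k}}\theta_2$ for a sequence $\lambda_k\to 0$. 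Choose $\delta$ concentrated near $\theta_1$ and $\theta_2$ (with matching marginal-cost coordinate) that is $\sim_{g_1}$-preserving; using the strict monotonicity of $f_{(2)}$ in its second argument, together with Assumption \ref{assump:diff}(b), I would argue that at $g_{\lambda_k}$ the same $\delta$ is not in the kernel and gives $D_g f_{(2)}|_{g_{\lambda_k}}[\delta]\neq 0$. Continuity of the derivative in $g$ then forces $D_g f_{(2)}|_{g_1}[\delta]\neq 0$, contradicting the kernel property.

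This establishes that $\sim_g$ is locally constant on $\mathcal G$. Since $\mathcal G$ is convex and in particular path-connected, any two distributions can be joined by a compact segment covered by finitely many such neighborhoods of local constancy, and so $\sim_{g_1}=\sim_{g_2}$ for arbitrary $g_1,g_2\in\mathcal G$. This is precisely equilibrium invariance.

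The main obstacle I anticipate is the perturbation step. Selecting a $\delta$ that is genuinely $\sim_{g_1}$-preserving yet genuinely crosses $\sim_{g_{\lambda_k}}$-classes is straightforward in principle, but showing that it produces a \emph{nonzero} derivative at $g_{\lambda_k}$ rather than a small-but-still-zero one requires a nondegeneracy property of the map from the quotient $\Theta/\sim_g$ into scores. I expect to extract this from the strict single-crossing baked into Assumption \ref{assump_scoring}(a) and the fact that two types bid different scores in the new equilibrium only if the pushforward onto their marginal-cost slice changes in a way that shifts $f_{(2)}$ monotonically. A secondary subtlety is that admissibility requires $g+\lambda\delta>0$; since $\mathcal G$ consists of strictly positive densities, this restriction is mild and can be met by keeping $\delta$ bounded and supported in the interior of $\Theta$. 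Once these points are dispatched, the continuity-in-$g$ of the linear derivative, combined with incompatible kernel conditions at $g_1$ and at $g_{\lambda_k}$, delivers the contradiction cleanly.
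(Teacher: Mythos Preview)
Your continuity step has a genuine gap. You claim that $D_g f_{(2)}\big|_{g_{\lambda_k}}[\delta]\neq 0$ for each $k$, and that continuity in $g$ then forces $D_g f_{(2)}\big|_{g_1}[\delta]\neq 0$. But a sequence of nonzero numbers can have limit zero. As $\lambda_k\to 0$, the classes $[\theta_1]_{g_{\lambda_k}}$ and $[\theta_2]_{g_{\lambda_k}}$ coalesce into $[\theta_1]_{g_1}=[\theta_2]_{g_1}$, so the extent to which your $\sim_{g_1}$-preserving $\delta$ crosses $\sim_{g_{\lambda_k}}$-classes shrinks to zero, and with it the magnitude of $D_g f_{(2)}\big|_{g_{\lambda_k}}[\delta]$. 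Continuity is entirely consistent with these derivatives being nonzero for each $k$ yet tending to zero, equal to the value at $g_1$; no contradiction results. The obstacle you flag in your last paragraph is not ``small-but-still-zero'' at $g_{\lambda_k}$; it is ``nonzero-but-vanishing as $k\to\infty$.''

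A structural symptom: your argument never invokes the finiteness of $n$ in any essential way, which is a red flag since the proposition is precisely about finite versus infinite moment representations. The paper's proof uses $n<\infty$ directly. For each fixed $\theta$, the directional derivative of the score bid at $g$ in direction $v$ is $\int\psi_{\theta,g}(\tau)v(\tau)\,d\tau$ with $\psi_{\theta,g}(\tau)=\sum_{i=1}^n\sigma_{\theta,i}(g)\,\zeta^i_\theta(\tau)$; crucially the functions $\zeta^i_\theta$ do not depend on $g$, so $\psi_{\theta,g}$ lies in a fixed $n$-dimensional subspace for every $g$. Lemma~\ref{lem:moment_match} shows that $\psi_{\theta,g}$ must be constant on $\{\tau:\tau\succ_g\theta\}$ and must differ from that constant just below $[\theta]_g$. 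If invariance fails along a path, one can select $n+2$ distributions $g_{\lambda_1},\dots,g_{\lambda_{n+2}}$ at which the boundary $[\theta]_g$ has shifted by strictly increasing amounts, together with test points $\tau_1,\dots,\tau_{n+2}$ straddling those boundaries. The resulting $(n+2)\times(n+2)$ matrix $(\psi_{\theta,g_{\lambda_i}}(\tau_j))_{i,j}$ then has rank at least $n+1$ by construction, yet it factors as an $(n+2)\times n$ matrix times an $n\times(n+2)$ matrix and so has rank at most $n$. That rank obstruction is the contradiction. Your kernel observation is correct and is essentially one half of Lemma~\ref{lem:moment_match}; what is missing is this finite-rank constraint to play it against, in place of the limiting argument.
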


When equilibrium invariance does not hold, full information about the type distribution is needed to determine the equilibrium order. Because this order is needed to form strategies, such an auction will not admit a CBE. A proof of this argument relies on the fact that the strategy of type $\theta$ at $g$ only depends on the distribution over and actions of types in the set of types that lose to $\theta,$ per Proposition \ref{prop:fsa_strat}. If a CBE exists, then at each $g\in\mathcal G$, the moments used in strategizing should then relate closely to this set. If equilibrium invariance does not hold, then, when we consider all $g\in \mathcal G$, there are an infinite number of such sets. They cannot be described sufficiently well with only a finite number of moments and so informational precision must be high. 

To make this argument more precise, we focus on how $n$-moment strategies change as the distribution $g$ changes. First, we must develop some new terminology. We define the set of directions $\mathcal G_0$ as
\[\mathcal G_0=\{v\in \mathcal L^1(\Theta,\mu):\int v(\tau)~d\tau=0\}\]
For any $g\in\mathcal G$ and $v\in\mathcal G_0$, for small enough $\ep>0$, the distribution $g+\ep v\in\mathcal G$. Thus, it makes sense to consider changes in strategies as $g$ is perturbed in the direction $v$.  Recall that $n$-moment strategies can be describled by a differentiable function $\varphi_\theta$ and $n$ generalized moments. For each of these moments, by linearity, 
\[\frac{\partial M^i_\theta(g+\ep v)}{\partial \ep}\bigg\rvert_{\ep=0}=\lim_{\ep\to 0}\frac{\int \zeta_\theta^i(\tau) [g(\tau) + \ep v(\tau)]~d\tau - \int \zeta_\theta^i(\tau) g(\tau)~d\tau}{\ep}=\int \zeta_\theta^i(\tau) v(\tau)~d\tau.\]
Let $\sigma_\theta=\Phi\circ \varphi_\theta$ be the score bidding strategy as a function of the $n$ moments. Additionally, slightly abusing notation, let 
\[\sigma_{\theta,i}(g):=\frac{\partial \sigma_\theta(M^1_\theta(g),...,M^{i-1}_\theta(g),z,M^{i+1}_\theta(g),...,M^n_\theta(g))}{\partial z}\bigg\rvert_{z=M^i_\theta(g)}.\]
The subscript $i$ corresponds to the partial derivative of the score bid with respect to the $i$th moment. We can now express the change in the score bid as $g$ is perturbed in the direction $v$ as
\[\frac{\partial s^*(\theta,g+\ep v)}{\partial \ep}\bigg\rvert_{\ep=0}=\int \left(\sum_{i=1}^n \sigma_{\theta,i}(g) \zeta^i_{\theta}(\tau) \right)v(\tau)~d\tau.\]

This holds for any $v\in\mathcal G_0$. Thus, the derivative of an $n$-moment strategy for a type at a distribution is a single moment on the space of directions $\mathcal G_0.$ We define the function $\psi_{\theta,g}:\Theta\to\mathbb R$ to describe this derivative in the same way that the functions $\zeta_\theta^i$ describe the moments used in strategizing. That is, we define
\[\psi_{\theta,g}(\tau):=\sum_{i=1}^n \sigma_{\theta,i}(g) \zeta^i_{\theta}(\tau).\]

With this terminology, the notion of moments ``relating closely'' to the equilibrium structure is made more precise through the following lemma:

\begin{lemma}\label{lem:moment_match}
Assume an auction can be implemented with $n$-moment strategies and fix a type $\theta\in\Theta$. Then, the following must hold:
\begin{enumerate}[(a)]
\item Almost surely, $\psi_{\theta,g}(\tau)$ is constant on $\{\tau\in \Theta:\tau\succ_g \theta\}$.
\item Let $c$ be the value of $\psi_{\theta,g}(\tau)$ on this set. Then there exists a type $\theta'\prec_g \theta$ such that, almost surely, \[\psi_{\theta,g}(\tau)\not=c\text{ for }\{\tau\in\Theta:\theta' \prec_g \tau\prec_g \theta\}.\]
\end{enumerate}
\end{lemma}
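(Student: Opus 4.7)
The plan is to exploit the structural characterization of Proposition \ref{prop:fsa_strat}, which expresses the equilibrium score $s^*(\theta,g)=\Phi(p_{BE}(m,f_{(2)}(g)),q_{BE}(m,f_{(2)}(g)))$ as a function of $g$ only through the conditional expectation $f_{(2)}(g)$, itself an integral over equivalence classes weaker than $[\theta]_g$ in the pushforward density $g_m$. A useful preliminary is that within-class mass movements are a direction of strategic invariance (illustrated in Figure \ref{fig:fsa_strat}): for any $v\in\mathcal G_0$ supported inside a single equivalence class, all score bids are unchanged, so the derivative formula $\int \psi_{\theta,g}(\tau)v(\tau)\,d\tau=0$ forces $\psi_{\theta,g}$ to be a.s.\ constant on each class of $\sim_g$. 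The problem then reduces to describing $\psi_{\theta,g}$ across classes.

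For part (a), I would take any $v\in\mathcal G_0$ supported on $S_\theta^+:=\{\tau:\tau\succ_g\theta\}$. Such a perturbation only redistributes mass across classes strictly stronger than $[\theta]_g$ while leaving the pushforward density $g_m$ on classes weaker than or equal to $[\theta]_g$ untouched. By Proposition \ref{prop:fsa_strat}, $f_{(2)}(g+\ep v)$ is unchanged at first order, hence $\int \psi_{\theta,g}(\tau)\,v(\tau)\,d\tau=0$ for every such $v$. Since $\psi_{\theta,g}|_{S_\theta^+}\in\mathcal L^\infty(S_\theta^+)$ annihilates the mean-zero subspace of $\mathcal L^1(S_\theta^+)$, a standard duality argument forces $\psi_{\theta,g}$ to be almost-surely constant on $S_\theta^+$.

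For part (b), denote this constant by $c$ and fix any $\tau_2\in S_\theta^+$. For a candidate $\tau_1\prec_g\theta$, I would compute the derivative of $f_{(2)}$ under the perturbation $v=\delta_{\tau_2}-\delta_{\tau_1}$ (formalized by shrinking $L^1$-bumps around each point and invoking Lebesgue differentiation). Because $\tau_2\in S_\theta^+$ makes no contribution to either integral defining $f_{(2)}$, direct calculation from Proposition \ref{prop:fsa_strat} yields
\[
\left.\frac{\partial f_{(2)}(g+\ep v)}{\partial \ep}\right|_{\ep=0}=\frac{f_{(2)}(g)-z_{\tau_1}}{\int \mathbb I_{z\ge f}\,g_m([(m,z)]_g)\,dz},
\]
where $z_{\tau_1}$ is the fixed-cost coordinate of the representative of $[\tau_1]_g$ at marginal cost $m$. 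Since $g>0$ everywhere, $f_{(2)}(g)>f$ strictly, so this quantity is nonzero for every $\tau_1$ with $z_{\tau_1}\in[f,f_{(2)}(g))$. Composing with the nonzero derivative of the score with respect to $f_{(2)}$ (inherited from Assumption \ref{assump_scoring}) and comparing to $\int \psi_{\theta,g}\,v\,d\tau=c-\psi_{\theta,g}(\tau_1)$, I conclude $\psi_{\theta,g}(\tau_1)\ne c$ for almost every $\tau_1$ in this range. Taking $\theta'$ to be any type whose representative has $z$-coordinate strictly below $f_{(2)}(g)$ then delivers the desired interval $(\theta',\theta)$.

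The hardest part will be rigorously justifying the derivative of $f_{(2)}$ while the equivalence classes of $\sim_g$ themselves shift as $g$ is perturbed, since the characterization in Proposition \ref{prop:fsa_strat} is stated in terms of the equilibrium partition at the specific distribution $g$. My plan is to leverage the preliminary observation: because any within-class reallocation is a direction of strategic invariance, the first-order displacement of a class boundary under $v$ induces only within-class mass flows in an infinitesimal collar around the boundary, which contribute nothing to $s^*(\theta,g+\ep v)$ at first order. This formally licenses computing the derivative of $f_{(2)}$ as though the $g$-partition were frozen, which is precisely the calculation both parts of the proof require.
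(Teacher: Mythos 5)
Your part (a) is essentially the paper's own argument: perturb $g$ only on $\{\tau:\tau\succ_g\theta\}$, note that the equilibrium below $s^*(\theta,g)$ (and hence $\theta$'s bid, via Proposition \ref{prop:fsa_strat} and Lemma \ref{lem:curve_fixed}) is unaffected, and conclude constancy of $\psi_{\theta,g}$ by a calculus-of-variations/duality step. That part is fine.

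Part (b), however, has a genuine gap at exactly the point you flag as "the hardest part." Your calculation of $\partial f_{(2)}/\partial\ep$ freezes the equilibrium partition at $g$, and you justify this by arguing that boundary displacement only induces within-class flows in an infinitesimal collar. That justification fails for the perturbations part (b) requires: when mass is added in the competitive region (scores in $(s^*(\theta',g),s^*(\theta,g))$) and removed above $s^*(\theta,g)$, the equilibrium bids of \emph{all} types bidding in and above that region move at order $\ep$, so the projections $\rho_m(\tau,g+\ep v)$ of a positive-measure set of types inside the conditioning event $\{\tau\prec_g\theta\}$ shift at first order. The resulting contribution to $f_{(2)}$ is the integral the paper labels $(\dagger)$, namely $\int \frac{\partial\rho_m(\tau,g+\ep v)}{\partial\ep}\, g(\tau)\,\mathbb I_{\tau\prec_g\theta}\,d\tau$, and it is not zero in general; the within-class-invariance fact you invoke concerns reallocating mass holding the partition fixed and says nothing about the first-order motion of the partition itself. (In part (a) the freeze is legitimate only because the perturbation lives strictly above $s^*(\theta,g)$, so the losing region's bids and projections genuinely do not move.)

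Because of this, your single-$m$ conclusion "$\psi_{\theta,g}(\tau_1)\neq c$ for a.e.\ $\tau_1$ with representative in $[f,f_{(2)}(g))$" does not follow: for a fixed $m$ the shifting-partition term can exactly cancel your explicit term, which is precisely what the paper must rule out. Its proof does so not pointwise but in aggregate: assuming the score derivative vanishes forces $(\dagger)$ to be strictly signed for \emph{every} marginal cost $m$ (this is why the paper chooses $\theta'$ with $(m,f_{(2)}(m))\prec_g\theta'\prec_g\theta$ for all $m$), and the contradiction comes from rewriting $(\dagger)$ via implicit differentiation of $s^*(m,\rho_m(\tau,\cdot),\cdot)=s^*(\tau,\cdot)$ and integrating over $m$, where an antisymmetry (one type becoming relatively less competitive means another becomes relatively more competitive) forces the aggregate to be zero. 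Your proposal is missing this aggregation step, and without it the frozen-partition computation does not establish part (b). (Minor: the sign in your displayed derivative should be $(z_{\tau_1}-f_{(2)}(g))$ in the numerator, though this does not affect the nonvanishing claim.)
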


\begin{figure}[t]
\centering
\begin{subfigure}{0.45\textwidth}
    \begin{tikzpicture}
   \fill[pattern color=blue2,pattern=north west lines,domain=215:260] plot (.3,.2)--plot ({5.25+6*cos(\x)}, {7.5+6*sin(\x)})--(4.2,.2)--(.3,.2);
   \fill[color=white,domain=215:260] (.3,.2)--plot ({5.25+6*cos(\x)}, {7+6*sin(\x)})--(4.2,.2)--(.3,.2);
   \fill[pattern color=red2,pattern=crosshatch dots,domain=215:260] (.3,.2)--plot ({5.25+6*cos(\x)}, {7+6*sin(\x)})--(4.2,.2)--(.3,.2);
   \draw[<->,thick] (0,5) node[left]{$f$}-- (0,0) -- (5,0) node[below]{$m$};
   \draw [black, thick,domain=215:260] plot ({5.25+6*cos(\x)}, {7+6*sin(\x)});
   \fill[fill=black] (1,2.75) circle (2pt) node[right]{$\theta$};
\end{tikzpicture}
\subcaption{The red dotted region shows the set of types that win against $\theta$ at $g$. The blue lined region shows the set of competitive types that lose against $\theta$. On the red region, $\psi_{\theta,g}(\tau)$ must be constant almost surely. On the blue region, this function cannot equal its value on the red region almost surely.}\label{fig:converse_a}
\end{subfigure}
\begin{subfigure}{0.45\textwidth}
    \begin{tikzpicture}
    \draw[<->,thick] (0,5) node[left]{$f$}-- (0,0) -- (5,0) node[below]{$m$};

   \draw [dashed,blue2, thick,domain=215:260] plot ({5.25+6*cos(\x)}, {7+6*sin(\x)});

   \draw [dashed,blue4, thick,domain=220:247] plot ({8.07+10*cos(\x)}, {9.82+10*sin(\x)});
   \draw [dashed, black, thick,domain=218:270] plot ({4.05+4.3*cos(\x)}, {5.8+4.3*sin(\x)});

   \fill[fill=black] (1,2.75) circle (2pt);

   \fill[fill=black] (4,.9) circle (2pt) node[right]{$\tau_2$};
   \fill[fill=black] (4,.4) circle (2pt) node[right]{$\tau_1$};
   \fill[fill=black] (4,1.3) circle (2pt) node[right]{$\tau_3$};

\end{tikzpicture}
\subcaption{The dashed curves are the equivalence classes $[\theta]_{g_i}$ for $i\in \{1,2,3\}$ with darker colors signifying higher $i$. Applying Lemma \ref{lem:moment_match}, we get that
$\psi_{\theta,g_1}(\tau_1)\not=\psi_{\theta,g_1}(\tau_2),$ and\\
$\psi_{\theta,g_2}(\tau_1)=\psi_{\theta,g_2}(\tau_2)\not = \psi_{\theta,g_2}(\tau_3).$}\label{fig:converse_b}
\end{subfigure}
\caption{The main ideas in the proof of Proposition \ref{prop:converse} are shown above.}\label{fig:converse}
\end{figure}
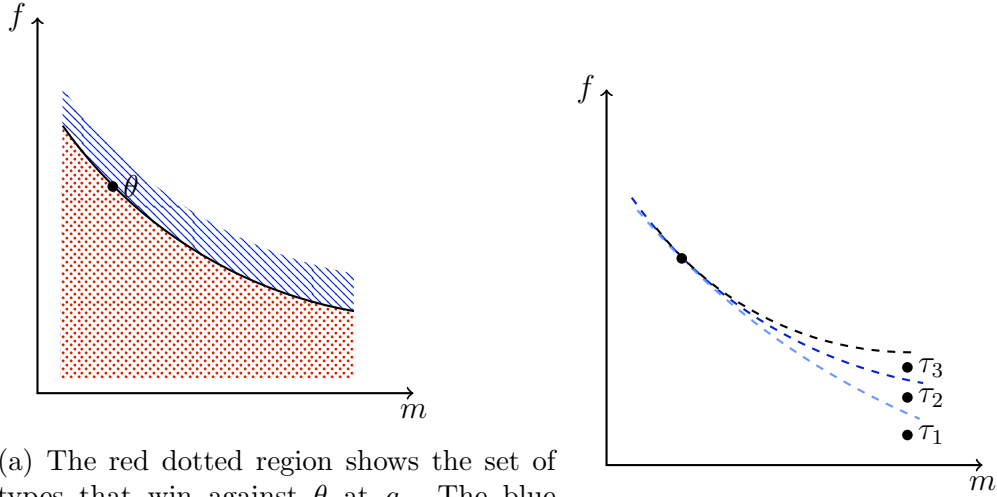
The first part of the lemma implies that if mass is moved within the set of types that win the contract over type $\theta$, then locally type $\theta$'s strategy should not change.  The second part of the lemma implies that if mass is moved from types that win the contract over $\theta$ to ``competitive types'' against which $\theta$ wins the auction, type $\theta$'s strategy should change. This is in line with standard intuition; with this type of mass movement, the conditional mean costs of the types that $\theta$ beats should decrease. Thus, $\theta$'s bid should get more competitive. Both parts of this lemma are summarized in Figure \ref{fig:converse_a}. 

With this lemma, the proof of Proposition \ref{prop:converse} is relatively simple. If an auction can be implemented in $n$-moment strategies, then as $g$ varies, the values of $\psi_{\theta,g}(\tau)$ for a fixed $\tau$ can be defined by an $n$-dimensional vector. Lemma \ref{lem:moment_match} allows us to show this cannot be the case for every $\tau\in\Theta$.
 
\subsection{Testing For CBE's}\label{sub:ei_conds}

In the previous section, I formed an equivalence between mechanisms admitting CBE's and those satisfying equilibrium invariance. The primary benefit of this was the sharp characterization of informationally coarse ($n=2$) and fully precise ($n=\infty$) mechanisms. Another way to view this result is that it translates the definition of CBE's, a statement defined on action spaces, to one defined on type spaces. While thinking about strategies in terms of action spaces leads to more interpretability, doing so leads to more analytical difficulties than considering type spaces. With equilibrium invariance, we are able to take a direct revelation approach and characterize informationally coarse auctions using standard mechanism design tools. First, I define effort to be the convex transformation of quality, $q^\eta$, that appears in participants' cost structures.  With this reparametrization, agents have linear utility structures. I then define a symmetric direct revelation mechanism $(x,y,t):\Theta^2\times \mathcal G$ where
\begin{itemize}
\item $x(\theta_1,\theta_2,g)\text{ is the probability $\theta_1$ is awarded the contract over $\theta_2$ at the distribution $g$,}$
\item $y(\theta_1,\theta_2,g)$\text{ is the effort $\theta_1$ provides conditional on beating $\theta_2$ at the distribution $g$,}\footnote{In a first-score auction, this will not depend on $\theta_2$ because of the chosen auction format. We will be using this notation to discuss second-score auctions as well so dependence on $\theta_2$ is included here.}
\item $t(\theta_1,\theta_2,g)\text{ is the transfer $\theta_1$ receives conditional on beating $\theta_2$ at the distribution $g$,}$
\end{itemize}
 I use capital letters to denote interim allocations, writing
\[Z(\theta_1,g)=\E_{\theta_{2}\sim g}[z(\theta_1,\theta_2,g)]\]
for $z\in \{x,y,t\}$. \\

In equilibrium at a specific distribution $g$, for all types $\theta=(m,f)$, we must have that the usual individual rationality and incentive compatibility constraints hold:
\begin{equation}\tag{IC}X(\theta,g)(T(\theta,g)-mY(\theta,g)-f)\ge X(\hat \theta,g)(T(\hat \theta,g)-mY(\hat \theta,g)-f)~\forall \hat \theta\in \Theta,\end{equation}

\begin{equation}\tag{IR}X(\theta,g)(T(\theta,g)-mY(\theta,g)-f)\ge 0.\end{equation}

Equilibrium invariance is equivalent to the ex post allocation rule being independent of the type distribution, or $x(\theta_1,\theta_2,g)=: x(\theta_1,\theta_2)$ for all $g\in\mathcal G$. This restricts the set of ex post allocation rules that can be implemented in an informationally coarse auction; some allocation rules can be implemented at some distribution in $\mathcal G$ but will not yield an IC mechanism at some other distribution. Considering this stronger form of incentive compatibility yields Theorem \ref{thm:rev_equiv}.

\begin{theorem}\label{thm:rev_equiv}
A first-score auction with scoring rule $\Phi$ is informationally coarse if and only if it implements the same interim outcomes as the second-score auction with the same scoring rule at all distributions $g\in\mathcal G$.
\end{theorem}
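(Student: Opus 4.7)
My plan is to leverage the equivalence between informational coarseness and equilibrium invariance (established by Proposition \ref{prop:converse} together with the corollary following Proposition \ref{prop:fsa_strat}) to recast the theorem as: \emph{the first-score auction satisfies equilibrium invariance if and only if its interim allocation and interim transfer coincide with those of the second-score auction at every $g\in\mathcal G$}.

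For the ``only if'' direction, I would first show that equilibrium invariance forces the first-score ex post allocation $x^{FS}(\theta_1,\theta_2)$ to coincide with the second-score allocation $x^{SS}(\theta_1,\theta_2)$. The second-score order is pinned down by the break-even scores $s_{BE}(\theta)$, and Assumption \ref{assump_scoring}(a) (single-crossing) implies that any symmetric first-score equilibrium bid $s^*(\cdot,g)$ is weakly monotone in $s_{BE}(\cdot)$, so $\succeq_g$ refines $\succeq^{SS}$ at every $g$. Equilibrium invariance makes $\succeq_g$ independent of $g$; to rule out that two distinct second-score pseudotypes collapse into a single first-score equivalence class at \emph{every} $g$, I exploit the fact that the ``type to mimic'' $f_{(2)}$ in Proposition \ref{prop:fsa_strat} depends on the restriction of $g$ to the relevant class---asymmetric choices of $g$ across the candidate merged types would separate their bids and contradict invariance. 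Once the ex post rules coincide, a standard envelope / payoff-equivalence argument closes the gap: after reparametrizing $q^\eta$ as effort, utility is linear in the private information, and both auctions are IC and IR at every $g$ while implementing the same allocation rule. The interim utility is therefore pinned down by the allocation rule and the zero-surplus boundary condition at the highest-cost pseudotype, yielding matching interim transfers.

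For the ``if'' direction, equality of the two interim allocations at every $g$ translates to
\[\int \mathbb I[\theta_1\succeq_g \theta_2]\, g(\theta_2)\,d\theta_2 \;=\; \int \mathbb I[\theta_1\succeq^{SS} \theta_2]\, g(\theta_2)\,d\theta_2\]
for every $\theta_1\in\Theta$ and every $g\in\mathcal G$. Since $g$ ranges over all strictly positive densities on $\Theta$ and $\succeq^{SS}$ is $g$-independent, this identity forces $\succeq_g$ to coincide with $\succeq^{SS}$ almost everywhere on $\Theta^2$. Hence the first-score equilibrium order is $g$-independent, which is equilibrium invariance, which by the earlier characterization is informational coarseness.

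The main obstacle is the sub-argument in the ``only if'' direction that distinct second-score pseudotypes cannot merge into a common first-score equivalence class across \emph{all} $g$ simultaneously. Monotonicity of $s^*$ in $s_{BE}$ gives the easy half of the ordering match; the harder half requires exploiting the $g$-dependence of $f_{(2)}$ in equation \ref{eq:f2} to construct perturbations that force candidate merged types to split.
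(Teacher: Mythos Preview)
Your reduction to equilibrium invariance, your envelope-theorem closure once allocations match, and your ``if'' direction are all in line with the paper (the paper's appendix actually omits the ``if'' direction, so your argument there is more explicit and correct).

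The real divergence is in the core step of the ``only if'' direction: showing that under equilibrium invariance the first-score order $\succeq_g$ coincides with the break-even order $\succeq^{SS}$. You assert that Assumption~\ref{assump_scoring}(a) directly yields weak monotonicity of $s^*(\cdot,g)$ in $s_{BE}(\cdot)$, calling this the ``easy half,'' and then plan to perturb $f_{(2)}$ to rule out merging. But Assumption~\ref{assump_scoring}(a) is a single-crossing condition in the $m$ coordinate (it concerns $\partial_m(u/u_1)$), not a statement about monotonicity in the break-even score; the paper never claims or establishes the monotonicity you invoke, and it is not clear how to extract it from that assumption alone. So what you label the easy half is in fact the step that needs an argument.

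The paper bypasses both your monotonicity claim and your $f_{(2)}$ perturbation with a single limiting construction. Take $\theta_1,\theta_2$ with $s_{BE}(\theta_1)=s_{BE}(\theta_2)$ and suppose for contradiction that $\theta_1\succ\theta_2$ in the (invariant) first-score order. Let $u$ be the uniform density on $\{\theta':\theta_1\succ\theta'\succ\theta_2\}$ and set $g_n=\tfrac{1}{n}g_0+(1-\tfrac{1}{n})u$. As $n\to\infty$, $X(\theta_2,g_n)\to 0$ while $X(\theta_1,g_n)\to 1$. Since $\theta_1$'s equilibrium score satisfies $s^*(\theta_1,g_n)<s_{BE}(\theta_1)=s_{BE}(\theta_2)$, type $\theta_2$ earns a strictly positive per-win payoff $\overline U$ from mimicking $\theta_1$; bounding $\theta_2$'s equilibrium per-win payoff by $\bar m+\bar f$, for $n$ large enough the deviation strictly dominates, contradicting IC. The same argument works verbatim (and even more easily) when $s_{BE}(\theta_1)<s_{BE}(\theta_2)$, so it handles the reversal case without any appeal to monotonicity of $s^*$ in $s_{BE}$ and without touching the $f_{(2)}$ formula.
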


With one dimension of private information, the Revenue Equivalence Theorem gives a quick derivation of the first-price auction's $2$-moment strategies. In this setting, when a first-score auction has $2$-moment strategies, the appropriate generalization of the Revenue Equivalence Theorem provides the same guarantee. Additionally, Theorem \ref{thm:rev_equiv} gives more interpretability of the moments learned by the agents in informationally coarse auctions. Because these first-score auctions implement the same order as their correspond second-score auctions, the equilibrium structure is defined by the ``break-even contract order'' with respect to $\Phi$. 

Theorem \ref{thm:rev_equiv} is not very difficult to prove. Informally, as the distribution of the opponent's type gets more competitive---as it puts more density on types with lower costs---an agent's probability of winning and his gains from shading his bid drop. Thus, his optimal bid approaches his break-even contract regardless of his type. As a result, the implemented equilibrium order approaches the break-even contract order. Then, because the equilibrium structure must be the same at all distributions, the break-even contract order must hold at all distributions. 

Applying this theorem gives the following testable condition:
\begin{proposition}\label{prop:fsa_linear}
 A first-score auction with scoring rule $\Phi$ admits a coarse beliefs equilibrium if and only if its break-even effort function, $e_{BE}:={q_{BE}}^\eta:\Theta\to [0,1]$ is linear in fixed cost.
\end{proposition}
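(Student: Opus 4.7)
My plan is to invoke Theorem \ref{thm:rev_equiv} to translate the CBE condition into a PDE on a single operating-profit function $\Pi$, from which linearity of $e_{BE}$ in $f$ can be read off. Define $\Pi(m,s) := \max_{e\in[0,1]}[P(s,e^{1/\eta}) - me]$, smooth by Assumption \ref{assump_scoring}(b), with envelope identity $\Pi_m(m,s) = -e^*(m,s)$ for the profit-maximizing effort $e^*$. Break-even is characterized by $\Pi(m,s_{BE}(m,f)) = f$ and $e_{BE}(m,f) = e^*(m,s_{BE}(m,f))$. The first-score bidder's FOC for type $(m,f)$ bidding score $s$ reads
\[\frac{\Pi(m,s) - f}{\Pi_s(m,s)} = -\frac{H_g(s)}{h_g(s)}.\]

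For necessity, by Theorem \ref{thm:rev_equiv} and Section \ref{sub:diff}, a CBE forces the equivalence classes $\sim_g$ to coincide with the iso-$s_{BE}$ curves for every $g$. Parametrizing such a curve by $m$ with $f = \Pi(m,s_0)$, all types on it share a common bid $s^*=s^*(s_0,g)$, so the FOC forces
\[\frac{\Pi(m,s^*) - \Pi(m,s_0)}{\Pi_s(m,s^*)}\quad\text{independent of }m.\]
By Proposition \ref{prop:fsa_strat}, varying $g$ sweeps $s^*$ over an interval below $s_0$, giving an open set of achievable $(s_0,s^*)$ pairs on which the identity holds. Differentiating first in $m$, then in $s_0$, gives $\Pi_{sm}/\Pi_s$ independent of $s$, and integration forces the separable form $\Pi(m,s) = C(m) - \phi(m)\Psi(s)$. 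Combining $e_{BE}(m,f) = -\Pi_m(m,s_{BE}(m,f))$ with $\Psi(s_{BE}(m,f)) = (C(m)-f)/\phi(m)$ then produces $e_{BE}(m,f) = -C'(m) + (\phi'(m)/\phi(m))(C(m)-f)$, linear in $f$.

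For sufficiency, if $e_{BE}(m,f) = \alpha(m) + \beta(m)f$, substituting $s = s_{BE}(m,f)$ in $e^*(m,s_{BE}(m,f)) = e_{BE}(m,f)$ and using $f = \Pi(m,s)$ gives $e^*(m,s) = \alpha(m) + \beta(m)\Pi(m,s)$; combined with $\Pi_m = -e^*$ this is the first-order linear ODE $\Pi_m(m,s) + \beta(m)\Pi(m,s) = -\alpha(m)$, whose general solution is separable: $\Pi(m,s) = C(m) - \phi(m)\Psi(s)$ where $\phi(m) = \exp(-\int_{\underline m}^m \beta)$ and $\Psi$ encodes the $s$-dependent integration constant. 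With this form the FOC reduces to $\Psi(s_{BE}(m,f)) = \Psi(s^*) + \lambda_g(s^*)\psi(s^*)$ (with $\psi = \Psi'$ and $\lambda_g = H_g/h_g$), so the equilibrium bid depends on type only through $s_{BE}(m,f)$. Equivalence classes are iso-$s_{BE}$ curves independent of $g$, equilibrium invariance holds, and Theorem \ref{thm:rev_equiv} delivers the CBE. The main obstacle lies in the necessity half: justifying that varying $g$ genuinely produces an open set of $(s_0,s^*)$ pairs so the derived PDE identity can be differentiated as an honest identity rather than merely along an equilibrium locus. I would handle this with a continuity argument---as $g$ concentrates near the type corresponding to $s_0$, $s^* \to s_0$; as mass shifts to more competitive types along the iso-curve, $s^*$ decreases strictly---so the achievable pairs fill an open set, and smoothness of $\Pi$ extends the resulting PDE to its closure.
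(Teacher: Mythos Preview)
Your argument is correct and takes a genuinely different route from the paper's. The paper leans on Theorem~\ref{thm:rev_equiv} in a more direct way: since an informationally coarse first-score auction must yield the same interim \emph{effort} as the second-score auction at every $g$, Proposition~\ref{prop:fsa_strat} gives
\[
\mathbb{E}_{g_m}\bigl[e_{BE}(m,Z)\mid Z\ge f\bigr]=e_{BE}\bigl(m,\mathbb{E}_{g_m}[Z\mid Z\ge f]\bigr)
\]
for every $g$ and every $(m,f)$, and a converse to Jensen's inequality then forces $e_{BE}(m,\cdot)$ to be affine. Your approach instead extracts the separable structure $\Pi(m,s)=C(m)-\phi(m)\Psi(s)$ directly from the first-order condition, by exploiting that all types on an iso-$s_{BE}$ curve share a common bid and then differentiating the resulting identity in $m$ and $s_0$.

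What each buys: the paper's converse-Jensen argument is shorter and avoids any discussion of how the achievable $(s_0,s^*)$ pairs fill out an open set---the step you correctly flag as the delicate one in your approach. On the other hand, your argument treats the sufficiency direction explicitly and cleanly: from linearity you solve a first-order linear ODE in $m$ to recover the separable $\Pi$, and then read off from the FOC that the equilibrium bid depends on type only through $s_{BE}(m,f)$, which is exactly equilibrium invariance. The paper's written proof handles only the necessity direction; the sufficiency half is left implicit. Your separability characterization also makes transparent \emph{why} linearity of $e_{BE}$ is the right condition---it is equivalent to the indirect profit $\Pi$ factoring as $C(m)-\phi(m)\Psi(s)$, which is precisely what makes the FOC collapse to a one-dimensional equation in the break-even score.
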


Revenue Equivalence and Proposition \ref{prop:fsa_strat} imply that for every type, taking break-even efforts and applying expectations must commute. A converse to Jensen's inequality tells us that this implies linearity. Importantly, this test does not require computing an equilibrium. Break-even strategies are easily verifiable given any scoring rule, as shown below:

\begin{corollary}
First-score auctions with quasilinear scoring rules or price-per-quality ratio scoring rules admit CBE's. First-score auctions with a quality discount scoring rule do not.
\end{corollary}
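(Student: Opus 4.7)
The plan is to apply Proposition \ref{prop:fsa_linear} to each of the three scoring rules in turn: compute the break-even effort function $e_{BE}(m,f) = q_{BE}(m,f)^\eta$ from its first-order condition and check linearity in $f$. Since the break-even contract satisfies $p_{BE} = m q_{BE}^\eta + f$, the quality $q_{BE}$ is the maximizer of $\Phi(m q^\eta + f, q)$ over $q \in [0,1]$, and by Assumption \ref{assump_scoring}(b) this maximum is interior and characterized by the associated FOC in $q$.

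For the \textbf{quasilinear} rule $\Phi(p,q) = \phi(q) - p$, the break-even objective is $\phi(q) - m q^\eta - f$, whose FOC $\phi'(q) = m \eta q^{\eta-1}$ does not involve $f$. Hence $q_{BE}$, and therefore $e_{BE}$, depends only on $m$, so $e_{BE}$ is trivially (constantly) linear in $f$. For the \textbf{PQR} rule $\Phi(p,q) = -p/q$, the objective is $-m q^{\eta-1} - f/q$, whose FOC $m(\eta-1) q^{\eta-2} = f/q^{2}$ rearranges directly to $e_{BE} = q_{BE}^\eta = f/[m(\eta-1)]$ for $\eta>1$, which is linear in $f$. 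Both rules therefore admit CBE by Proposition \ref{prop:fsa_linear}.

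For the \textbf{QD} rule $\Phi(p,q) = -p(\bar{Q} - q)$, maximizing $-(m q^\eta + f)(\bar{Q} - q)$ gives the FOC
\[
f \;=\; m\, q_{BE}^{\eta-1}\bigl(\eta \bar{Q} - (\eta+1) q_{BE}\bigr).
\]
Differentiating implicitly and using $e_{BE} = q_{BE}^\eta$, I would obtain
\[
\frac{\partial e_{BE}}{\partial f} \;=\; \frac{q_{BE}}{m\bigl[(\eta-1)\bar{Q} - (\eta+1) q_{BE}\bigr]}.
\]
For $\eta > 1$, direct differentiation in $q_{BE}$ shows the right-hand side is a strictly monotone (in fact strictly increasing) function of $q_{BE}$, and $q_{BE}$ itself varies strictly with $f$ because the FOC defines a smooth bijection on the interior domain. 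Hence $\partial e_{BE}/\partial f$ is not constant in $f$, so $e_{BE}$ is nonlinear in $f$ and Proposition \ref{prop:fsa_linear} rules out CBE.

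The main obstacle is the QD case, since I must rule out linearity rather than verify it. The cleanest route is implicit differentiation: once $\partial e_{BE}/\partial f$ is expressed as an explicit function of $q_{BE}$, failure of linearity reduces to the observation that this expression is nonconstant in $q_{BE}$ (a two-line derivative check) combined with the fact that $q_{BE}$ genuinely depends on $f$ through the FOC. Everything else is either a direct FOC computation (quasilinear, PQR) or an algebraic rearrangement, so the only delicate step is the implicit-differentiation calculation for QD.
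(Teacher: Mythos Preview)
Your proposal is correct and follows essentially the same route as the paper: apply Proposition~\ref{prop:fsa_linear} to each rule by computing the break-even first-order condition and checking whether $e_{BE}$ is linear in $f$. The only cosmetic differences are that the paper parametrizes the optimization directly in effort $e=q^\eta$ (and, for QD, normalizes $\bar Q$), while you work in $q$; and for the QD case the paper simply records the implicit equation and asserts nonlinearity, whereas you carry out the implicit differentiation to exhibit $\partial e_{BE}/\partial f$ as a nonconstant function of $q_{BE}$. Your treatment of the QD case is therefore slightly more explicit than the paper's, but the underlying argument is the same.
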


\begin{proof}
    \begin{enumerate}[(a)]
        \item In a first-score auction with a quasilinear scoring rule, $e_{BE}$ minimizes
        \[\phi(e^{1/\eta})-p \text{ s.t. } p-e\cdot m-f\ge 0.\]
        Taking a first-order condition gives
        \[\phi'({e_{BE}}^{1/\eta})\cdot\left(\frac{1}{\eta}{e_{BE}}^{1/\eta-1}\right)-m=0.\]
        $e_{BE}$ does not depend on fixed costs so it is linear in fixed cost (with coefficient $0$).
        \item In a first-score auction with a PQR scoring rule and $\eta>1$, $e_{BE}$ minimizes
        \[\frac{p}{e^{1/\eta}} \text{ s.t. } p-em-f\ge 0.\]
        The minimand of this expression is
        \[e_{BE}(m,f)=\frac{f}{(\eta- 1)m}\]
        which is linear in $f$.
        \item In a first-score auction with a quality discount scoring rule, $e_{BE}$ minimizes
        \[p(1-e^{\frac 1 \eta})\text{ s.t. } p-em-f\ge 0.\]
        The minimand of this expression satisfies
        \[m={e_{BE}(m,f)}^{\frac 1 \eta-1}\cdot \bigg[(1+\frac 1 \eta )\cdot {e_{BE}(m,f)}\cdot m+\frac f \eta)\bigg]\]
        and is nonlinear in $f$.
        
    \end{enumerate}

\end{proof}

\section{Microfoundations}\label{sec:microfound}
In this section, I provide a formal model of information acquisition and belief formation in a coarse beliefs equilibrium. The model in this section is more general than that required to discuss concepts in Sections \ref{sec:model} and \ref{sec:results}. First, I relax assumptions on the number of and symmetry between agents. Second, I model agents playing a general game with potentially more dependence of one agent's utility on another's type and action than in a first-score auction. Third, I allow for agents to learn from a general information technology. In the main result of this section, I relate coarse prior equilibria to equilibria of common prior games and show that for moment-based information technologies, a coarse prior equilibrium is equivalent to an equilibrium in $n$-moment strategies.

\subsection{Model} \textbf{Overview.} There is a finite number of agents $n$ who play in a predetermined mechanism. At the start of the game, agent $i$'s private information is captured by the payoff type $\theta_i\in \Theta$  for $i\in\{1,...,n\}.$ There is a state of the world $g_0\in \Delta(\Theta)^n$ which captures the true distribution of payoff types in a large population from which these agents are drawn. Each agent's payoff type is, then, drawn independently according to the density $g_{0,i}\in \Delta(\Theta)$\footnote{Note that this formulation implies conditional independence. The realization of $\theta_i$ yields no information about other agents' components of the state.}. Agents know that $\mathbf g_0\in\mathcal G\subset \Delta(\Theta)^n$ but they have no priors over $\mathcal G$. They are, however, able to learn coarse information about $\mathbf g_0$. The structure of their learning process will be described in more detail below. Once the information acquisition stage is completed, the agents play in the mechanism, which is defined by the set of actions $A$, set of outcomes $Y$, and a function from $A^n\to Y$. Each agent's ex post utility from the mechanism is captured by the indirect utility function
\[U_i(\bm \theta,\mathbf a).\]
 In summary, including the information acquisition stage, they participate in a two-stage game whose timing is defined as follows:
\begin{enumerate}
\item The rules of the mechanism are announced, agents learn their payoff types, and then agents form arbitrary beliefs.
\item Each agent privately selects a feasible signal structure which is a partition of the state space $\mathcal G$.
\item Each agent receives a private signal which is the element of his selected partition that contains the true state $g_0$.
\item Each agent updates his beliefs given his signal realization.
\item The agents participate in the announced mechanism.
\end{enumerate}

 Given that the choice of initial beliefs is uninformed, my notion of simplicity will correspond to an equilibrium in which this choice is strategically irrelevant for every possible realization of signals. In other words, for every belief that is reasonable for an agent to hold, his best-response should be the same.\\
In formalizing these concepts, I will first discuss the structure of agents' learning process, then I will turn to agents' prior and posterior beliefs, and finally, I will present my solution concept.

\noindent\textbf{Signal Structures.} There is a set of (non-fully revealing) partitions of the state space $\mathcal I$. This set, called the information technology, is common knowledge. After his payoff type is realized, each agent privately selects a partition from $\mathcal I$ as his signal structure. Because there is uncertainty about other agents' payoff types, one agents' off-path behavior---that is, his choice of a signal structure at an unrealized payoff type---will affect another's agents on-path actions. Thus, for each possible payoff type, an agent chooses a signal structure. Agent $i$'s information acquisition strategy is captured by the function $I_i:\Theta\to\mathcal I$. For each payoff type $\theta_i$, the image of this map, $I_i(\theta_i)$ forms a partition of the state space. Each agent receives a signal and learns the set in $I_i(\theta_i)$ containing the true state $g_0$. Fixing a payoff type $\theta_i$, agent $i$ will have different beliefs depending on the realization of the signal. As a result, the payoff type space $\Theta$ is insufficient for fully describing the beliefs agents hold before playing in the mechanism. Thus, for each agent we define a new type space, $\mathcal T_i\subset \Theta\times 2^{\mathcal G}$ whose elements are the terms
\[(\theta_i,E_i)\]
for all $\theta_i\in \Theta$ and $E_i\in I_i(\theta_i)$. I refer to $t_i\in \mathcal T_i$ as agent $i$'s belief type. I write $\theta(t_i)$ as $t_i$'s corresponding payoff type and $E(t_i)$ as the set of distributions admissible to $t_i.$ Note that this definition of the belief type space depends on the signal structure $\bm I$. When necessary, I write the type space as $\mathcal T_i(I_i)$ but suppress this dependence when clear. 

\noindent\textbf{Beliefs.} Each agent enters the game with arbitrary beliefs about the state and their opponents' types. Rather than modelling these explicitly, I follow \cite{harsanyi} and model this implicitly through the construction of a belief structure (see \cite{mertenszamir} for more detail on the equivalence of these two approaches). Let $\pi_i\in \Delta(\mathcal G)$ be agent $i$'s (arbitrary) first-order belief\footnote{Important measurability considerations are obscured here as in general the set $\mathcal G$ can be infinite-dimensional. To deal with this, we first restrict to distributions whose supports admit some finite parametrization.  For simplicity, we then restrict the set of possible priors to measures that are absolutely continuous with respect to the Lebesgue measure over this parameter space.}. After choosing the signal structure $I_i$ in the first stage and learning the realization $E_i$, each agent updates his first-order beliefs using Bayes rule. In other words, this signal realization induces the posterior first-order beliefs  $\tilde \pi_i:\mathcal T_i\to \Delta(\mathcal G)$ defined by
\[\mathbb P_{\tilde \pi_i(\theta_i,E_i)}(A)=\frac{\mathbb P_{\pi_i}(A\cap E_i)}{\mathbb P_{\pi_i}(E_i)}\]

for all Lebesgue measurable sets $A\subset \mathcal G$. In general, defining first-order beliefs is insufficient for fully specifying belief hierarchies. However, because the state in this model is a distribution over opponents' types, each first-order belief structure uniquely extends to a full hierarchy of beliefs as expressed below:

\begin{lemma}\label{lem:belief_extend}
The first-order belief structure $\bm{\tilde \pi}$ uniquely defines beliefs of the form $\hat \pi_i:\mathcal T_i(I_i)\to \Delta(G\times \mathcal T_{-i}( I_{-i}))$ for all $i$.
\end{lemma}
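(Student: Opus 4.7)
The plan is to construct $\hat\pi_i(t_i)$ explicitly as a push-forward of $\tilde\pi_i(t_i)$ and then to show it is the unique belief consistent with the primitives. The key observation is that, because the information technology $\mathcal I$ and each opponent's signal-acquisition strategy $I_j$ are common knowledge, an opponent's belief type $t_j = (\theta_j, E_j)$ is a deterministic function of the state $g$ and the payoff type $\theta_j$: $E_j$ is simply the unique cell of the partition $I_j(\theta_j)$ that contains $g$. Hence, once a belief on $\mathcal G$ and the conditional law of $\boldsymbol\theta_{-i}$ given $g$ are fixed, nothing stochastic remains to pin down in the distribution of $\mathbf t_{-i}$; and the latter conditional is prescribed by the model (conditional independence with marginals $g_j$).

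Formally, I would proceed in three steps. First, define the map $\Xi_j : \mathcal G \times \Theta \to \mathcal T_j$ by $\Xi_j(g,\theta_j) = (\theta_j, E_j(g,\theta_j))$, where $E_j(g,\theta_j)$ is the cell of $I_j(\theta_j)$ containing $g$, and let $\Xi_{-i} = (\Xi_j)_{j \neq i}$. Second, for each $g \in \mathcal G$ let $\kappa_{-i}(\cdot \mid g) \in \Delta(\mathcal T_{-i})$ be the push-forward of $\prod_{j \neq i} g_j$ under $\Xi_{-i}(g,\cdot)$. Third, define
\[
\hat\pi_i(t_i)(A) \;=\; \int_{\mathcal G} \kappa_{-i}\bigl(\{\mathbf t_{-i} : (g,\mathbf t_{-i}) \in A\} \,\big|\, g\bigr)\, \tilde\pi_i(t_i)(dg),
\]
for measurable $A \subset \mathcal G \times \mathcal T_{-i}$. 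This is a probability measure whose marginal on $\mathcal G$ is $\tilde\pi_i(t_i)$ by construction, so its first-order projection coincides with the given posterior.

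For uniqueness, any belief on $\mathcal G \times \mathcal T_{-i}$ that (i) has marginal $\tilde\pi_i(t_i)$ on $\mathcal G$ and (ii) respects the primitives of the model must assign probability one to the event $\{\mathbf t_{-i} = \Xi_{-i}(g, \boldsymbol\theta_{-i})\}$ conditional on $(g, \boldsymbol\theta_{-i})$, since the partitions in $I_{-i}$ are fixed and known. Combining with the conditional law $\prod_{j \neq i} g_j$ on $\boldsymbol\theta_{-i}$ given $g$ identifies the conditional of $\mathbf t_{-i}$ given $g$ as $\kappa_{-i}(\cdot \mid g)$. The disintegration theorem then forces any such belief to equal the object constructed above.

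The main obstacle is bookkeeping rather than conceptual: one must verify that $\Xi_j$ is jointly measurable in $(g,\theta_j)$ and that $g \mapsto \kappa_{-i}(B \mid g)$ is measurable for each measurable $B$, so that the displayed integral is well defined. This follows under a mild regularity assumption on $\mathcal I$, namely that each partition $I_j(\theta_j)$ is generated by a countable collection of measurable subsets of $\mathcal G$ and that $I_j$ is itself measurable in $\theta_j$. Given the finite-parametric reduction and absolute-continuity restriction on priors already flagged in the footnote preceding the lemma, these properties are standard to check, and Fubini then delivers the measurability needed for both existence and the disintegration step of uniqueness.
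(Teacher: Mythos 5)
Your proposal is correct and follows essentially the same route as the paper's proof: belief types are deterministic functions of the state $g$ and payoff types (since the partition cell containing $g$ is unique), so each $g$ induces a unique distribution over opponents' belief types via the conditional law $\prod_{j\neq i} g_j$, and integrating against $\tilde\pi_i(t_i)$ yields the unique joint belief. Your version merely makes the push-forward construction, the disintegration argument for uniqueness, and the measurability bookkeeping explicit, which the paper leaves informal.
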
 

\noindent\textbf{Solution Concept.} Once agents have formed posterior beliefs, they are expected utility maximizers. Thus, fixing some posterior belief structure $\tilde \pi$ and opponents' action profile $a_{-i}:\mathcal T_{-i}\to A^{n-1}$, an agent with type $t_i$ prefers action $a_1$ over $a_2$ in the second stage of the game if  
\[\E_{\tilde\pi}[U(\theta(t_i),\theta(t_{-i}),a_1,a_{-i}(t_{-i})]\ge \E_{\tilde\pi}[U(\theta(t_i),\theta(t_{-i}),a_2,a_{-i}(t_{-i})].\]
To define preferences over the whole game, intuitively an agent should prefer one signal structure to another if it allows him to choose an action that gives him a higher expected utility: 

\begin{definition}[Preferences] Fix prior beliefs $\bm \pi$ and fix the strategy of other agents to be $\mathbf I_{-i}:\Theta^{n-1}\to \mathcal I^{n-1}$ and $a_{-i}: \mathcal T_{-i}(\mathbf I_{-i})\to A^{n-1}$. Then agent $i$ prefers the strategy $(I_i^1,a^1)$ to $(I_i^2,a^2)$ if for all  payoff types $\theta_i\in \Theta$,
\[\E_{\pi}[U(\theta_i,\theta(t_{-i}(g)),a^1(t^1_i(g)),a_{-i}(t_{-i}(g))]\ge \E_{\pi}[U(\theta_i,\theta(t_{-i}(g)),a^2(t^2_i(g)),a_{-i}(t_{-i}(g))]\]
where $t_{-i}(g)$ represents opponents' belief types at $g$ and $t^j_i(g)$ represents agent $i$'s belief type at $g$ under the signal structure $I_i^j(\theta_i)$. If this holds, we write that
\[(I_i^1,a^1)\succeq_{i,(\mathbf I_{-i},a_{-i}),\pi} (I_i^2,a^2)\]
\end{definition}
Because each agent arbitrarily chooses beliefs at the beginning of the game, his preferences should not be dependent on this random selection. Thus, we can refine this definition of preferences to one that holds at all beliefs:

\begin{definition}[Strong Preferences] 
Agent $i$ strongly prefers the strategy $(I_i^1,a^1)$ to $(I_i^2,a^2)$ if for all possible belief structures $ \tilde{\bm \pi}$, we have that
\[(I_i^1,a^1)\succeq_{i,(\mathbf I_{-i},a_{-i}),\pi} (I_i^2,a^2).\]
In this case, we write 
\[(I_i^1,a^1)\succeq_{i,(\mathbf I_{-i},a_{-i})} (I_i^2,a^2).\]
\end{definition}

I use the term preference but this relation is not a preference relation in the standard sense. In particular, it is not rational because it is not complete. One strategy may be better than another at some but not all prior belief structures. As a result, there may not be a unique maximal element according to this ordering. In general, the existence of best-responses depends both on the definition of the information technology and the mechanism. Nevertheless, when best responses do exist, a corresponding equilibrium notion can be defined.

\begin{definition}
A mechanism admits a \textbf{coarse beliefs equilibrium} with respect to the information technology $\mathcal I$ if for each agent, there is an information acquisition strategy $I_i^*:\Theta\to\mathcal I$ and a second-stage strategy $a_i^*:\mathcal T(I_i^*)\to A$ such that
\[(I_i^*,a_i^*)\succeq_{i,(I_{-i}^*,a_{-i}^*)} (I,a)\]
for all other strategies $I:\Theta\to\mathcal I$ and $a:\mathcal T(I)\to A$.
\end{definition}

The simplest example of a coarse beliefs equilibrium is when the information technology is fully uninformative. In this case, an agent's equilibrium strategy must be a best response at any belief given equilibrium strategies of his opponents. In this case, this concept is the same as a robust equilibrium. In general, the concept of a coarse beliefs equilibrium can be viewed as a weakening of the concept of a robust equilibrium. The space of feasible beliefs is restricted through the receipt of signals but after that, it requires an action be optimal at all reasonable belief structures.

\textit{Remark.} The common knowledge assumptions in this model are substantial weakenings of those in the Bayes-Nash model. In particular, this model does not require agents share a common prior or even common knowledge of some information structure. All that is needed is common knowledge about the information technology $\mathcal I$. I view this as less restrictive than specific knowledge about belief hierarchies.

\subsection{Results}

As mentioned above, the concept of a CBE relates closely to robust equilibrium. As shown in \citet{bergmorr}, in certain classes of games, a mechanism satisfies robust implementability if and only if it is interim-implementable on all common prior type spaces. An analogous result will hold here. 

To begin, I define the common prior game at $g\in\mathcal G$ to be similar to the one defined above with the sole change that it is common knowledge that all agents believe the true distribution is $g$. All agents are assumed to be expected utility maximizers so they play pure Bayes-Nash strategies written as  $a^{CP}_i(\theta_i,g)$\footnote{I assume a unique, pure strategy Bayes-Nash equilibrium in this section solely for ease of exposition. These results hold more generally.}. The main result of this section can be stated as follows:

    \begin{theorem}\label{thm:CBE_informal}
        A mechanism admits a coarse beliefs equilibrium defined by $I_i^*$ and $a_i^*:\mathcal T(I_i^*)\to A$ if and only if
        \[a_i^*(t_i)=a^{CP}_i(\theta_i,g_1)=a^{CP}_i(\theta_i,g_2)\]
        for all $t_i:=(\theta_i,E_i)\in\mathcal T( I_i^*)$ and $g_1,g_2\in E_i$.\end{theorem}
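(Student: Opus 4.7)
The plan is to prove both directions by reduction to the common prior game, using the ``strong preferences'' quantifier over all admissible belief structures to extract common prior best-response conditions.

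\textbf{Sufficiency direction.} Assume the stated pointwise identity $a_i^*(t_i)=a_i^{CP}(\theta_i,g)$ for every $g\in E_i$. Fix an arbitrary prior $\pi_i$ and the induced posterior $\tilde\pi_i$ after observing $E_i$; its support lies in $E_i$. I would write the expected utility from any deviation $a$ against the opponents' CBE strategies as an iterated integral, first over $g\sim \tilde\pi_i$ and then over $\theta_{-i}\sim g_{-i}$, so that at the inner level opponent $j$ plays $a_j^*(\theta_j,E_j^*(\theta_j,g))$. Applying the hypothesis to each opponent, this equals $a_j^{CP}(\theta_j,g)$, so the inner integral coincides with type $\theta_i$'s common prior expected payoff at $g$. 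By definition of $a_i^{CP}(\theta_i,g)$ the inner integral is maximized by $a_i^*(t_i)=a_i^{CP}(\theta_i,g)$; integrating over $g$ preserves the inequality, so $a_i^*(t_i)$ is optimal at $\tilde\pi_i$. Since $\pi_i$ was arbitrary, $(I_i^*,a_i^*)$ is a CBE.

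\textbf{Necessity direction.} Suppose $(I_i^*,a_i^*)$ is a CBE and fix any $g\in E_i$. Because ``strong preferences'' quantifies over all belief structures, I may pick the degenerate prior $\pi_i=\delta_g$; after Bayesian updating through $I_i^*(\theta_i)$ the posterior still concentrates on $g$, and by Lemma \ref{lem:belief_extend} the extension to higher-order beliefs concentrates as well, since the state pins down opponents' payoff-type distribution. Under this belief, opponent $j$ of payoff type $\theta_j\sim g_j$ has belief type $(\theta_j,E_j^*(\theta_j,g))$ and therefore plays the action $\tilde a_j(\theta_j,g):=a_j^*(\theta_j,E_j^*(\theta_j,g))$, which depends only on $(\theta_j,g)$. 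The CBE condition then forces $\tilde a_i(\theta_i,g)$ to be a best response to $\tilde a_{-i}(\cdot,g)$ when $\theta_{-i}\sim g_{-i}$. Running the same argument for every agent shows $\tilde a(\cdot,g)$ is a Bayes-Nash equilibrium of the common prior game at $g$; by the uniqueness assumption stated in the section, $\tilde a_i(\theta_i,g)=a_i^{CP}(\theta_i,g)$. Since $E_i^*(\theta_i,g)=E_i$ whenever $g\in E_i$, this is exactly the claimed identity, and it holds for every such $g$.

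\textbf{Main obstacle.} The substantive work is entirely in the necessity direction, specifically in justifying that a prior concentrated on a single state is admissible in the belief-hierarchy framework of Lemma \ref{lem:belief_extend}, so that the CBE best-response condition really does collapse to the common prior best-response condition at that state simultaneously for all agents. Once that reduction is clean, the uniqueness hypothesis does the rest; dropping uniqueness would require tracking a measurable equilibrium selection across $g$, but this is not needed under the assumptions of the section.
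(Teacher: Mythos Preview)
Your proposal is correct and follows essentially the same approach as the paper's proof: for necessity you specialize to the degenerate belief $\delta_g$ so that the CBE best-response condition collapses to the common prior one and then invoke uniqueness, and for sufficiency you use pointwise optimality at each $g$ and integrate over the arbitrary prior. The paper's argument is identical in structure, and your identification of the admissibility of the degenerate prior under Lemma~\ref{lem:belief_extend} as the only genuine step is accurate; the paper handles this in one line by simply asserting the belief structure $\tilde\pi_i(t_j)=\delta_g$ for all $t_j$ with $g\in E_j$.
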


The proof of Theorem \ref{thm:CBE_informal} is quite straightforward. For the forward direction, if a strategy is optimal at all beliefs, then it must be optimal at all plausible common prior beliefs. For the reverse direction, if an action is optimal point-wise at every type distribution $g$, then it must be optimal at any belief that can be expressed as a mixture over these distributions. 

Thus, instead of considering this two-stage game and many different belief structures, it suffices to characterize equilibrium actions in common prior games to understand which equilibria can be implemented in coarse beliefss. Implementability of an equilibrium depends on the information technology to which agents have access. I will consider from now on the symmetric case in which all agents' types are drawn from the same distribution. Slightly, abusing notation I will call this distribution $g_0$ and let $\mathcal G=\Delta(\Theta)$. I define a natural class of information technologies below:

\begin{definition}\label{def:n_mom_info_tech}
The $n$-moment information technology $\mathcal I_n$ is defined as
\[\mathcal I_n=\{(M_1,...,M_n):n\in\mathbb N,M_i\in \mathcal G^*\}.\]
The finite-moment information technology is defined as
\[\mathcal I_{\mathbb N}=\bigcup_{n\in\mathbb N} \mathcal I_n.\]
\end{definition}
The $n$-moment information technologies are not formally defined through partitions of the state space. The objects in definition \ref{def:n_mom_info_tech} can be identified with information technologies through the following relationship:
\[(M_1,...,M_n)\Longleftrightarrow \bigg\{\{g\in\mathcal G:M_i(g)=r_i~\forall i\}:\forall (r_1,...,r_n)\in\mathbb R^N\bigg\}\]
In words, a signal structure in $\mathcal I_{\mathbb N}$ specifies a finite sequence of moments. We identify a sequence of moments with the partition of the state space induced through different moment realizations. With such a definition, we can formally discuss concepts that appear in previous sections of this paper.

\begin{corollary}\label{cor:fmit_nmoments}
A first-score auction has a coarse beliefs equilibrium with respect to $\mathcal I_{n}$ if and only if its common prior equilibria can be implemented in $n$-moment strategies.
\end{corollary}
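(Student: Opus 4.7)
The plan is to use Theorem \ref{thm:CBE_informal} as a bridge and exploit the explicit identification, given right after Definition \ref{def:n_mom_info_tech}, between a signal structure $(M_1,\dots,M_n)\in\mathcal I_n$ and the partition of $\mathcal G$ whose cells are the level sets $\{g:M_i(g)=r_i \text{ for all }i\}$. Under this identification, a coarse beliefs equilibrium with respect to $\mathcal I_n$ consists of, for each payoff type $\theta$, a tuple of $n$ moments $(M^1_\theta,\dots,M^n_\theta)\in(\mathcal G^*)^n$ specifying $I_i^*(\theta)$, together with a second-stage action $a_i^*(\theta,E)$ defined on the resulting cells $E$. Theorem \ref{thm:CBE_informal} then states precisely that $(I_i^*,a_i^*)$ is a CBE if and only if $a_i^*(\theta,E)$ agrees with the common prior equilibrium action $a^{CP}_i(\theta,g)$ for every $g\in E$.

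For the reverse direction, suppose the common prior equilibrium admits an $n$-moment representation $(p^*,q^*)(\theta,g)=\varphi_\theta(M^1_\theta(g),\dots,M^n_\theta(g))$ as in Definition \ref{def:detail_dependence}. Define the information acquisition strategy $I_i^*(\theta):=(M^1_\theta,\dots,M^n_\theta)\in\mathcal I_n$; on each cell $E=\{g:M^j_\theta(g)=r_j\text{ for all }j\}$ of the induced partition, the moment vector is constant, so $a^{CP}_i(\theta,g)=\varphi_\theta(r_1,\dots,r_n)$ takes a common value across $g\in E$. Setting $a_i^*(\theta,E):=\varphi_\theta(r_1,\dots,r_n)$ then satisfies the hypothesis of Theorem \ref{thm:CBE_informal} by construction, yielding a CBE with respect to $\mathcal I_n$.

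For the forward direction, suppose a CBE $(I_i^*,a_i^*)$ with respect to $\mathcal I_n$ exists. Write $I_i^*(\theta)=(M^1_\theta,\dots,M^n_\theta)$. By Theorem \ref{thm:CBE_informal}, the map $g\mapsto a^{CP}_i(\theta,g)$ is constant on each cell of $I_i^*(\theta)$ and therefore factors through the moment vector: $a^{CP}_i(\theta,g)=\varphi_\theta(M^1_\theta(g),\dots,M^n_\theta(g))$ for some function $\varphi_\theta:\mathbb R^n\to\mathbb R_+\times[0,1]$. This is exactly Definition \ref{def:detail_dependence}, so the common prior equilibrium is implemented in $n$-moment strategies.

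The result is essentially a reformulation of Theorem \ref{thm:CBE_informal} under the explicit parametrization of $\mathcal I_n$, so I do not anticipate a serious obstacle. The one point to flag is that the type-indexing of moments $M^1_\theta,\dots,M^n_\theta$ in Definition \ref{def:detail_dependence} lines up naturally with the type-indexing of information acquisition strategies $I_i^*:\Theta\to\mathcal I_n$: different payoff types may acquire different moments, which the CBE framework already allows because $I_i^*$ is a function on $\Theta$ rather than a fixed partition.
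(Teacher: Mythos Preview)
Your proposal is correct and follows exactly the approach the paper intends: the corollary is presented without a separate proof because it is an immediate specialization of Theorem~\ref{thm:CBE_informal} to the information technology $\mathcal I_n$, and your argument simply writes out that specialization by matching the cells of $I_i^*(\theta)$ with the level sets of the chosen moments. The one small thing worth noting is that Definition~\ref{def:detail_dependence} only asks for a function $\varphi_\theta$ (differentiability is imposed separately in Assumption~\ref{assump:diff}), so your set-theoretic factoring in the forward direction is sufficient as stated.
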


As discussed in Section \ref{sec:model}, the main results of this paper imply auctions that have a CBE with respect to $\mathcal I_{\mathbb N}$ also have CBE's with respect to strictly less informative information technologies:

\begin{corollary}
Let $\Phi^*(\mathcal I)$ be the set of scoring rules that have coarse beliefs equilibria with respect to $\mathcal I$. Then
\[\Phi^*(\mathcal I_k)=\begin{cases}
\emptyset&
    k\in\{0,1\}\\
    \Phi^*(\mathcal I_{\mathbb N})&k>2.
\end{cases}\]
\end{corollary}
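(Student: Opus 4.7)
The plan is to use Corollary \ref{cor:fmit_nmoments} to rewrite membership in $\Phi^*(\mathcal I_k)$ as implementability of the common-prior equilibrium in $k$-moment strategies, and then dispatch the two cases with separate arguments: Theorem \ref{thm:main_theorem} for $k>2$ and the explicit structure of Proposition \ref{prop:fsa_strat} for $k\in\{0,1\}$.

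For $k>2$, since any $k'$-moment strategy with $k'\le k$ can be padded by constant moments to become a $k$-moment strategy, one obtains the chain $\Phi^*(\mathcal I_2)\subseteq \Phi^*(\mathcal I_k)\subseteq \Phi^*(\mathcal I_{\mathbb N})$. Theorem \ref{thm:main_theorem} then closes the sandwich: any scoring rule admitting a finite-moment implementation admits a $2$-moment implementation, so both outer inclusions are equalities and $\Phi^*(\mathcal I_k)=\Phi^*(\mathcal I_{\mathbb N})$.

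The harder direction is $k\in\{0,1\}$. For $k=0$, a $0$-moment strategy is constant in $g$, while Proposition \ref{prop:fsa_strat} expresses the equilibrium bid as $(p_{BE},q_{BE})(m,f_{(2)})$ with $f_{(2)}$ a non-constant function of $g$ (e.g., distributions concentrated at different points in $\{z\ge f\}$ give different conditional means), immediately ruling out a CBE. For $k=1$ I would first apply Proposition \ref{prop:converse} to reduce to the equilibrium-invariant case, since otherwise no finite-moment implementation exists at all. Under EI the equilibrium score is $s^*(\theta,g)=\Phi\bigl((p_{BE},q_{BE})(m,f_{(2)}(g))\bigr)$ with $f_{(2)}(g)=N_\theta(g)/D_\theta(g)$, where $N_\theta,D_\theta$ are the moments generated by $\zeta^N_\theta(z)=z\,\mathbb I_{z\ge f}$ and $\zeta^D_\theta(z)=\mathbb I_{z\ge f}$. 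If $s^*(\theta,g)=\varphi(M(g))$ for a single moment $M$ with density $\zeta$, differentiating both expressions along an arbitrary direction $v\in\mathcal G_0$ and matching the resulting linear functionals of $v$ yields the pointwise identity
\[\zeta^N_\theta(\tau)-f_{(2)}(g)\,\zeta^D_\theta(\tau)=D_\theta(g)\,\varphi'(M(g))\,\zeta(\tau),\]
valid almost everywhere up to an additive constant (which is the only slack in equating two representing densities of a functional on $\mathcal G_0$). Picking two distributions $g_1,g_2$ with $f_{(2)}(g_1)\ne f_{(2)}(g_2)$ and subtracting forces $\zeta^D_\theta$ to be an affine function of $\zeta$; substituting back then forces the same of $\zeta^N_\theta$. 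This contradicts the linear independence of $z\,\mathbb I_{z\ge f}$ and $\mathbb I_{z\ge f}$ on $\{z\ge f\}$. The main obstacle is precisely this rigidity argument: showing that the two-moment ratio appearing in Proposition \ref{prop:fsa_strat} cannot be collapsed into a nonlinear function of any single moment, even after allowing the outer scalar transformation $\varphi$ and the affine indeterminacy inherent in moment representations.
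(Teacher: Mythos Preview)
Your proposal is correct and mirrors the paper's approach: the $k>2$ case follows from Theorem~\ref{thm:main_theorem} together with the obvious monotonicity $\Phi^*(\mathcal I_2)\subseteq\Phi^*(\mathcal I_k)\subseteq\Phi^*(\mathcal I_{\mathbb N})$, and the $k\in\{0,1\}$ cases are handled in the appendix proof of the corollary to Proposition~\ref{prop:fsa_strat} via exactly the derivative-matching (kernel) argument you describe. The only slip is in your final sentence: the contradiction requires the linear independence of $\{1,\zeta^N_\theta,\zeta^D_\theta\}$ rather than just $\{\zeta^N_\theta,\zeta^D_\theta\}$, since two affine functions of a single $\zeta$ can still be linearly independent from each other---what fails is fitting three independent functions into the two-dimensional space $\mathrm{span}\{1,\zeta\}$.
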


\color{black}

\section{Discussion}
\textbf{Number of participants.} A primary limitation of the model as it is written is that only $2$-participant auctions are considered formally. However, this is without much loss; for any symmetric first-score auction with $N+1$ participants, there is a strategically-equivalent $2$-player auction. To see this, consider the simpler first-price auction. It makes no difference strategically whether an agent is competing against $N$ bidders drawn from a density $g$ or $1$ strong bidder whose value is drawn from a density $Ng^{N-1}$. Returning to the first-score context, if the distribution of the first-order statistic of the opponents' equivalence classes are equal across two settings then strategies will be the same. The two moments used in CBE strategies would be taken with respect to this distribution in a general $N+1$ participant setting\footnote{This type of moment could be interpreted with the addition of a resampling phase to a standard estimation procedure. A sample of relevant firms would first be collected. Then a researcher could sample sets of $N-1$ firms from this data and keep track of the highest-scoring type of each set (according to the break-even contract order). Estimating means with respect to the distribution of these types would yield the desired moments.}. Extending the definition of CBE to this setting, auction formats would be evaluated in the same way as they are in this simpler model.

\textbf{Other benefits of equilibrium invariance.} The characterization of auctions that admit CBE's shows that they are exactly the ones that retain some nice properties of one-dimensional first-price auctions. Thus, it seems possible that this specific notion of simplicity would map onto other appealing properties of a mechanism. Two examples are discussed here. 

Firstly, one could consider an agent who interprets data from past auctions and plays in a current iteration of the same format\footnote{Many auctions in practice release information on winning bids to help participants strategize. See \citet{andymartino} for a discussion of this practice in Google ad auctions.}. If the past auctions have variable numbers of participant, the agent needs a model of interpreting data from those auctions in the setting he faces. This would be difficult in general, because as the number of participants changes, the equilibrium structure changes. Equilibrium invariance may give the agent a natural way to avoid this problem and to be able to interpret previous auction data, regardless of the number of participants.

Secondly, in general first-score auctions, \citet{n-h} show that a fixed-point problem must be solved to determine equilibrium strategies. Because of the well-established difficulty of doing so, a computationally-constrained agent would likely be unable to strategize optimally. If this holds, the only equilibrium strategies he could find would be in auctions that satisfy equilibrium invariance\footnote{See \citet{fphard} for a discussion on the difficulty of solving fixed-point problems.}.

\bibliographystyle{plainnat}
\bibliography{main}

\pagebreak

\appendix
\section{Proofs from Section \ref{sub:fsa_strat}}
Throughout this appendix, it will be useful to think of the extended type space $\overline \Theta:= [\underline m,\overline m]\times \mathbb R$ and extend all relevant definitions to this space. We do this because given some type $(m_1,f_1)\in \Theta$, it may be that the type in the same equilibrium equivalence class with marginal cost $m_2\in [\underline m,\overline m]$ does not have fixed cost in $[\underline f,\overline f]$. We'll consider strategies of types in $\overline \Theta$ even though the only types with positive density are those in $\Theta$.

In this section I prove Proposition \ref{prop:fsa_strat} and its corollary. I start with a lemma that will be useful in the proof of Proposition \ref{prop:fsa_strat}.

\begin{lemma}\label{lem:marg_score}
    Fix a type $\theta=(m,f)\in \Theta$. There is some type $(m,f_{(2)})\in \overline \Theta$ such that $s^*(m,f)=s_{BE}(m,f_{(2)})$ where $s_{BE}=\Phi\circ (p_{BE},q_{BE}).$
\end{lemma}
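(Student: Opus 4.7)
The plan is to show that the map $f' \mapsto s_{BE}(m, f')$ is a continuous, strictly decreasing surjection from $\mathbb R$ onto the relevant range of scores, and then invoke the intermediate value theorem to locate $f_{(2)}$ such that $s_{BE}(m, f_{(2)}) = s^*(m, f)$.

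First I would characterize $s_{BE}$. By definition, $(p_{BE}, q_{BE})(m, f')$ is the highest-scoring contract satisfying $p - m q^\eta - f' \geq 0$; because $\Phi$ is strictly decreasing in $p$, this constraint binds at the optimum, so $s_{BE}(m, f') = \max_q \Phi(m q^\eta + f', q)$. Assumption \ref{assump_scoring}(b) guarantees that this maximization has a unique interior solution with smoothly varying maximand, so $s_{BE}(m, \cdot)$ is continuous in $f'$. A one-variable envelope calculation gives $\partial s_{BE}/\partial f' = \Phi_p(p_{BE}, q_{BE}) < 0$, so $s_{BE}(m, \cdot)$ is strictly decreasing on $\mathbb R$.

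Next I would verify that the image of $s_{BE}(m, \cdot)$ covers every score that could conceivably be bid in equilibrium. As $f' \to -\infty$, the binding constraint $p = m q^\eta + f'$ lets $p$ be arbitrarily negative at each $q$, and since $\Phi$ is strictly decreasing in $p$, we have $s_{BE}(m, f') \to \sup \mathrm{Im}(\Phi)$. Symmetrically, as $f' \to +\infty$, $s_{BE}(m, f') \to \inf \mathrm{Im}(\Phi)$. Since $s^*(m, f)$ is by construction a bid in the auction, $s^*(m, f) \in \mathrm{Im}(\Phi)$, so it lies strictly between these two limits.

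Continuity, strict monotonicity, and the limit behavior above then let me apply the intermediate value theorem to produce a unique $f_{(2)} \in \mathbb R$ satisfying $s_{BE}(m, f_{(2)}) = s^*(m, f)$. The main obstacle I anticipate is showing surjectivity cleanly: one must confirm the limit behavior of $s_{BE}$ at $\pm \infty$ using only the properties of $\Phi$ guaranteed by Assumption \ref{assump_scoring}, and one must work on the extended type space $\overline \Theta$ rather than $\Theta$ because the resulting $f_{(2)}$ need not lie in $[\underline f, \overline f]$ — indeed, for $(m, f)$ with a low bid, $f_{(2)}$ can well exceed $\overline f$, which is precisely the reason for introducing $\overline \Theta$ at the start of the appendix.
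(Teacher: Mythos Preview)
Your backbone---continuity of $s_{BE}(m,\cdot)$ followed by the intermediate value theorem---matches the paper's. The divergence is in how $s^*(m,f)$ is placed inside the range of $s_{BE}(m,\cdot)$, and here your route differs from the paper's and is incompletely justified.

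The paper does not attempt surjectivity onto $\mathrm{Im}(\Phi)$. It uses two equilibrium observations: no type bids above its own break-even score, giving $s^*(m,f)\le s_{BE}(m,f)$, and every equilibrium bid weakly exceeds the worst type's break-even score $s_{BE}(\overline m,\overline f)$. A direct pointwise price comparison then yields $s_{BE}(m,\overline f+\overline m)\le s_{BE}(\overline m,\overline f)$, so $s^*(m,f)\in[s_{BE}(m,\overline f+\overline m),s_{BE}(m,f)]$, and IVT runs on the compact interval $[\underline f,\overline f+\overline m]$ (continuity via Berge's theorem). This also delivers the sharper conclusion $f_{(2)}\in[\underline f,\overline f+\overline m]$, which the appendix relies on later when defining the projection $\rho_m$.

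Your limit arguments, by contrast, have gaps. On the $f'\to-\infty$ side you assert that $p$ can be arbitrarily negative, but $\mathcal C=\mathbb R_+\times[0,1]$ forces $p\ge 0$; once $f'$ is sufficiently negative the break-even constraint is slack, $s_{BE}(m,f')$ is constant at $\max_q\Phi(0,q)$, and your envelope-based strict monotonicity breaks down there. (This direction is in any case unnecessary, since $f_{(2)}\ge f$.) On the $f'\to+\infty$ side the naive limit is $\max_q\lim_{p\to\infty}\Phi(p,q)$, not $\inf_q\lim_{p\to\infty}\Phi(p,q)=\inf\mathrm{Im}(\Phi)$; equality requires the pointwise limit to be constant in $q$. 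This does in fact follow from the clause ``$P(s,q)<\infty$ for all $s\in\mathrm{Im}\,\Phi$ and all $q$'' in Assumption~\ref{assump_scoring}(b), but that deduction is nontrivial and absent from your sketch. The paper's equilibrium-bound argument sidesteps all of this and lands directly on the bounded interval it needs.
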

\begin{proof}
    Firstly, we have that the score bid by this type in equilibrium satisfies \[s^*(m,f)\in [s_{BE}(m,\overline f+\overline m),s_{BE}(m,f)].\]
    
    To see this, we have that $s^*(m,f)\le s_{BE}(m,f)$ because bidding above this break-even score guarantees a strictly negative payoff conditional on winning the auction and a strictly positive probability of winning. A profitable deviation is to simply bid the break-even score. \\ In equilibrium, the lowest-type agent in $\Theta$ will bid his break-even score and win with probability $0$. Any other type will bid above this score to obtain a strictly positive expected payoff. Thus, it suffices to show that 
    \[s_{BE}(m,\overline f+\overline m)\le s_{BE}(\overline m,\overline f).\]
    If $\overline p,\overline q$ is the break-even contract of the type $(m,\overline f+\overline m)$, then because $\bar q\le 1$,
    \[0\le \overline p - m\cdot \overline q^\eta -(\overline f +\overline m) < \overline p -\overline m \cdot \overline q^\eta -\overline f \]
    The type $(\overline m,\overline f)$ can obtain the score $s_{BE}(m,\overline f+\overline m)$ with a contract that provides positive payoff so 
    $s_{BE}(m,\overline f+\overline m)\le s_{BE}(\overline m,\overline f)$.

    Now, we have that the break-even score for a type $(m,f)$ is the solution to the maximization problem\footnote{It is without loss to restrict the constraint to the equality case because the scoring rule is strictly decreasing in price.}
    \[\max_{p,q\in\mathcal C} \Phi(p,q) \text{ s.t. } p= m\cdot q^{\eta}+f.\]
    The constraint correspondence is compact-valued and continuous in $(m,f)$ so by Berge's Theorem of the Maximum, the solutions to this maximization problem are continuous in $(m,f)$. Then, if we fix a marginal cost $m$ and apply the Intermediate Value Theorem to the continuous function $s_{BE}(m,\cdot):[\underline f,\overline f +\overline m]\to \mathbb R$, there must be some $f_{(2)}$ satisfying the desired condition.
\end{proof}

Now, we turn attention to Proposition \ref{prop:fsa_strat}.

\textbf{Proposition \ref{prop:fsa_strat}.}\textit{
Fix $g\in \mathcal G$. For all $\theta\in \Theta$, the equilibrium strategy can be represented as 
                \[(p^*,q^*)(m,f,g)=(p_{BE},q_{BE})(m,f_{(2)})\]
where
                \begin{equation*}f_{(2)}=\frac{\int z\cdot \mathbb I_{z\ge f}~ g_m([(m,z)]_g)~ dz}{\int \mathbb I_{z\ge f} ~g_m([(m,z)]_g) dz}.\end{equation*}
}

\begin{proof}
As discussed in the main text, we would like to project our auction with $2$ dimensions of private information into a more familiar one-dimensional analogue where all types have marginal cost $m$. From Proposition $2$ of \cite{n-h}, we have that at any $g\in\mathcal G$, there exists a continuous density function of scores bid in equilibrium, $h(s)$ (because we are dealing with a fixed $g$ currently, we suppress dependence of $h$ on $g$ in the notation).  Now, fixing an $m$, we define a new distribution over types $g_{m}:\mathbb R\to\mathbb R$ as follows:
  \[G_m(f):=H(s^*(m,f,g))\]
  \[g_{m}(f):=h(s^*(m,f,g))\cdot \frac{\partial s^*(m,f',g)}{\partial f'}\bigg\rvert_{f'=f}.\] 
  
We now consider an auction in which types' fixed costs are their only private information and are drawn i.i.d according to $G_m$. In this auction, the agent with cost $f$ who receives their bid $b$ upon winning attains the expected utility
\[\mathbb P(win|b)\cdot (b-f).\]
From standard results on symmetric first-price auctions, in the unique symmetric Bayes-Nash equilibrium of this auction, bids satisfy
\begin{equation}\label{eq:1dauc} G_m(f)\cdot {b^*}'(f) +g_m(f)\cdot(b^*(f)-f)=0.\end{equation}
Now consider the original scoring auctions. If the type $(m,f)$ bids the score $s,$ then best-responding implies that  
\begin{equation}\label{eq:2dauc}H(s)\cdot u_1(s,m,f)+h(s)\cdot u(s,m,f)=0\end{equation}
where $u(s,\theta)$ is the indirect utility to type $\theta$ from bidding $s$. 
If we let $b(f)=p^*(m,f,g)-m\cdot {q^*}^\eta(m,f,g)$, then 
\[u(s,m,f)=b(f)-f \text{ and } u_1(s,m,f)\cdot s_2^*(m,f,g)=b'(f).\]
Simplifying gives us that Equations \ref{eq:1dauc} and \ref{eq:2dauc} are equivalent. Thus, 
\[b(f)=b^*(f):=\frac{\int z\cdot \mathbb I_{z\ge f}~ g_m(f) dz}{\int \mathbb I_{z\ge f} ~g_m(f) dz}.\]
 From Lemma \ref{lem:marg_score}, the contract $(p^*(m,f,g),q^*(m,f,g))$ is the break-even strategy of some type $(m,f_{(2)})$. We finish by noting that by the definition of a break-even strategy,
\[b^*(f)=p^*(m,f,g)-m\cdot {q^*}^\eta(m,f,g)=p_{BE}(m,f_{(2)})-m\cdot {q^\eta}_{BE}(m,f_{(2)})=f_{(2)}.\]
\end{proof}

\begin{corollary}
    If an auction satisfies equilibrium invariance then it has an informational precision of $2$.
\end{corollary}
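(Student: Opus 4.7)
The plan is to apply Proposition \ref{prop:fsa_strat} and observe that under equilibrium invariance, the fixed cost $f_{(2)}$ appearing in the representation of $(p^*,q^*)$ is a ratio of two moments of $g$, whence the bidding strategy is a $2$-moment strategy in the sense of Definition \ref{def:detail_dependence}.

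First I would perform a change of variables. Fix a type $\theta=(m,f)$. Because equilibrium invariance implies that the order $\succeq_g$ and the equivalence classes $[\tau]_g$ do not depend on $g$, I write them simply as $\succeq$ and $[\tau]$, and define a map $z_m:\Theta\to\mathbb R$ sending $\tau$ to the fixed cost of the (unique) type in $[\tau]$ with marginal cost $m$; this is well-defined on the extended type space and bounded by Lemma \ref{lem:marg_score}. Recalling from the proof of Proposition \ref{prop:fsa_strat} that $G_m(z)=H(s^*(m,z,g))=\mathbb P_{\tau\sim g}[\tau\preceq(m,z)]$, the integrals defining $f_{(2)}$ can be rewritten as
\[D(g):=\int_\Theta \mathbb I_{\tau\succeq\theta}\,g(\tau)\,d\tau,\qquad N(g):=\int_\Theta z_m(\tau)\,\mathbb I_{\tau\succeq\theta}\,g(\tau)\,d\tau.\]
Under equilibrium invariance, both integrands are bounded measurable functions of $\tau$ alone, so $N$ and $D$ qualify as moments per Definition \ref{def:moment}.

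With the two moments in hand, I would define $\varphi_\theta(x,y):=(p_{BE},q_{BE})(m,x/y)$, which is well-defined and differentiable since $D(g)>0$ for all $g\in\mathcal G$ and $(p_{BE},q_{BE})(m,\cdot)$ is smooth by Assumption \ref{assump_scoring}. Proposition \ref{prop:fsa_strat} then gives
\[(p^*,q^*)(m,f,g)=\varphi_\theta(N(g),D(g)),\]
which is precisely the form of a $2$-moment strategy required by Definition \ref{def:detail_dependence}.

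The main obstacle is the change-of-variables step converting $\int\cdot\,g_m([(m,z)])\,dz$ into $\int\cdot\,g(\tau)\,d\tau$ over $\Theta$. This requires that $z_m$ be well-defined on all of $\Theta$, which relies on extending equivalence-class representatives outside $\Theta$, and that it be bounded, both of which are guaranteed by Lemma \ref{lem:marg_score}. Once this bookkeeping is in place, the identification of $N$ and $D$ as moments and the reading off of the $2$-moment form from Proposition \ref{prop:fsa_strat} are immediate.
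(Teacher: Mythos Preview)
Your approach is essentially the same as the paper's: define the projection $z_m$ (the paper calls it $\rho_m$), rewrite $f_{(2)}$ as a ratio of two integrals over $\Theta$ whose integrands no longer depend on $g$ once equilibrium invariance fixes the equivalence classes, and read off the $2$-moment form from Proposition~\ref{prop:fsa_strat}.

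Two remarks. First, your indicator is pointed the wrong way: $\mathbb I_{z\ge f}$ picks out types with \emph{higher} fixed cost than $\theta=(m,f)$, and since higher costs yield lower equilibrium scores, these are the types $\tau$ with $\theta\succeq\tau$, not $\tau\succeq\theta$. The paper writes $\mathbb I_{(m,f)\succeq\tau}$. This is a slip, not a structural problem.

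Second, the paper's proof does more than yours: it also argues that no auction can have precision $0$ or $1$. Precision $0$ is ruled out because strategies visibly depend on $g$; precision $1$ is ruled out by a short argument finding a direction $v\in\mathcal G_0$ that annihilates the single moment yet still moves $f_{(2)}$ (by choosing two distributions with different conditional means). If ``informational precision of $2$'' is read as \emph{exactly} $2$, as the appendix treats it, your proposal is missing this lower-bound half. If you intended only the upper bound (that $2$ moments suffice), your argument is complete and coincides with the paper's.
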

\begin{proof}
Fixing a marginal cost $m\in [\underline m,\overline m]$, we can define a projection function that maps the equivalence classes of types into the type with marginal cost $m$:
\[\rho_m:\Theta\to [\underline f,\overline f +\overline m]\]
\[\theta \mapsto f \text{ such that } (m,f) \in [\theta]_g\]
This is a well-defined function because no two types with the same marginal cost will be in the same equivalence class. We can write the expression in Equation \ref{eq:f2} as 
\[f_{(2)}=\frac{\int \rho_m(\tau)\cdot \mathbb I_{(m,f)\succeq \tau}~ g(\tau)d\tau}{\int \mathbb I_{(m,f)\succeq \tau}~ g(\tau)d\tau}\]
which is a smooth function of $2$ moments\footnote{Note that we suppress dependence of $\succeq$ on $g$ when equilibrium invariance is satisfied.}. The differentiability of the functions $p_{BE}$ and $q_{BE}$ results from the smoothness of the scoring rule. 

To see that no such auction can have precision less than $2$, no first-score auction can have precision $0$ by Proposition \ref{prop:fsa_strat} because strategies depend on the distribution $g$. To show an auction satisfying equilibrium invariance cannot have an informational precision of $1$, fix a type $\theta$ and assume it has a strategy with a precision of $1$, described by the function $\zeta_\theta:\Theta\to \mathbb R$. Fix a distribution $g_1\in\mathcal G$ and let $v\in\mathcal G_0$ satisfy
\[\int \zeta_\theta(\tau) v(\tau)~ d\tau = 0.\]
In words, the strategy of type $\theta$ should not change when any distribution is changed in the direction $v$. Then it must be that
\[\int (\rho_m(\tau)-f_{(2)})\mathbb I_{(m,f) \succeq \tau} v(\tau)~ d\tau = 0\]  
where $f_{(2)}=\frac{\int \rho_m(\tau)\cdot \mathbb I_{(m,f)\succeq \tau}~ g_1(\tau)d\tau}{\int \mathbb I_{(m,f)\succeq \tau}~ g_1(\tau)d\tau}$. Otherwise, from Proposition \ref{prop:fsa_strat}, this type's strategy would change when the distribution $g_1$ changes in the $v$ direction. 
Now pick some distribution $g_2$ so that 
\[f_{(2)}\not=\frac{\int \rho_m(\tau)\cdot \mathbb I_{(m,f)\succeq \tau}~ g_2(\tau)d\tau}{\int \mathbb I_{(m,f)\succeq \tau}~ g_2(\tau)d\tau}.\]
Then changing $g_2$ in the direction $v$ should change type $\theta$'s strategy by Proposition \ref{prop:fsa_strat} but the strategy with precision $1$ does not change, giving a contradiction. 
\end{proof}

\section{Proofs From Section \ref{sub:diff}}
The goal of this section is to prove Proposition \ref{prop:converse}. There are $2$ lemmas proven beforehand to ease in exposition. The first tells us that if a type bids the same score at two distributions $g_1$ and $g_2\in \mathcal G$, then so will all types in the corresponding equivalence class. The second is the lemma discussed in the main text, essentially implying that the moments used in a finite-precision strategy must correspond to the equilibrium structure. 

    \begin{lemma}\label{lem:curve_fixed}
        Let $\theta\in \Theta$ and $g_1,g_2\in\mathcal G$. If $s^*(\theta,g_1)=s^*(\theta,g_2)$, then
        \[s^*(\theta',g_1)=s^*(\theta',g_2)=s^*(\theta,g_1)\]
        for all $\theta'\in [\theta]_{g_1}$. Equivalently,
        \[[\theta]_{g_1}=[\theta]_{g_2}.\]
    \end{lemma}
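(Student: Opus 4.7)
My plan is to leverage the first-order characterization of equilibrium bids that Assumption~\ref{assump_scoring} and Proposition~2 of \cite{n-h} provide. In the equilibrium at any $g \in \mathcal{G}$, the score $s^*(\theta, g)$ bid by type $\theta$ maximizes $H_g(s)\cdot u(s, \theta)$ and so satisfies the first-order condition
\[
h_g(s)\, u(s,\theta) + H_g(s)\, u_1(s,\theta) = 0,\qquad\text{equivalently,}\qquad -\frac{u(s,\theta)}{u_1(s,\theta)} = \frac{H_g(s)}{h_g(s)}.
\]
Under the single-crossing in Assumption~\ref{assump_scoring}(a), this FOC is globally sufficient, so if $s_0 := s^*(\theta, g)$ is the common winning score of the class $[\theta]_g$, then
\[
[\theta]_g \;=\; \Big\{\theta' \in \Theta :\; -\tfrac{u(s_0, \theta')}{u_1(s_0, \theta')} \;=\; \tfrac{H_g(s_0)}{h_g(s_0)}\Big\}.
\]
The feature I want to exploit is that the right-hand side depends on the distribution $g$ only through the single scalar $H_g(s_0)/h_g(s_0)$.

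Given this characterization, the argument I would record is brief. Set $s_0 := s^*(\theta, g_1) = s^*(\theta, g_2)$. Applying the FOC for type $\theta$ at $g_1$ and at $g_2$ and equating immediately forces
\[
\frac{H_{g_1}(s_0)}{h_{g_1}(s_0)} \;=\; -\frac{u(s_0, \theta)}{u_1(s_0, \theta)} \;=\; \frac{H_{g_2}(s_0)}{h_{g_2}(s_0)}.
\]
Since the scalars pinning down $[\theta]_{g_1}$ and $[\theta]_{g_2}$ in the displayed characterization coincide, the two equivalence classes are identical as subsets of $\Theta$. Unpacking, every $\theta' \in [\theta]_{g_1}$ also satisfies $s^*(\theta', g_2) = s_0$, which is exactly the equality in the lemma.

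The main obstacle is justifying that the FOC uniquely pins down each type's equilibrium bid globally, rather than merely as a stationary point, so that the set above is genuinely the equivalence class and not just the solution set of the FOC. I expect this to follow from the strict monotonicity of $-u(s,\theta)/u_1(s,\theta)$ in marginal cost $m$ (Assumption~\ref{assump_scoring}(a)) together with continuity of the equilibrium score density $h_g$ from Proposition~2 of \cite{n-h}; once this sufficiency is in hand, the remainder of the proof is a one-line matching of ratios, and symmetry in $(g_1, g_2)$ delivers the equivalence $[\theta]_{g_1} = [\theta]_{g_2}$ stated at the end of the lemma.
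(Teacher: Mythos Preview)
Your proposal is correct and follows essentially the same route as the paper: write the first-order condition $H_g(s)\,u_1(s,\theta)+h_g(s)\,u(s,\theta)=0$, use the common type $\theta$ to match the ratios $h_{g_1}(s_0)/H_{g_1}(s_0)=h_{g_2}(s_0)/H_{g_2}(s_0)$, and conclude that the solution set of the FOC at score $s_0$ is identical under $g_1$ and $g_2$. The paper resolves your flagged sufficiency concern not by arguing global optimality directly but by noting that, for each marginal cost, the FOC at a fixed score is satisfied by a unique fixed cost (since $u(s,m,f)/u_1(s,m,f)$ is strictly monotone in $f$), so the type in $[\theta]_{g_2}$ with $\theta'$'s marginal cost must be $\theta'$ itself.
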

    \begin{proof}
         Let $h_1,h_2$ be the equilibrium score distributions at $g_1$ and $g_2$ respectively. Optimizing expected utility implies that
        \[H_1(s)\cdot u(s,\theta')+h_1(s)\cdot u_1(s,\theta')=0\]
        for all $\theta'\in [\theta]_{g_1}$ where $s=s^*(\theta,g_1)$. From the given,
        \[H_1(s)\cdot u(s,\theta)+h_1(s)\cdot u_1(s,\theta)=H_2(s)\cdot u(s,\theta')+h_1(s)\cdot u_1(s,\theta')=0.\]
        Thus, $\frac{h_1(s)}{H_1(s)}=\frac{h_2(s)}{H_2(s)}$ and 
        \[H_2(s)\cdot u(s,\theta')+h_2(s)\cdot u_1(s,\theta')=0\]
        for all $\theta'\in [\theta]_{g_1}$. There is a unique type with a given marginal cost for which this FOC holds. Thus, it is sufficient for concluding that $\theta'\in [\theta]_{g_2}$.
    \end{proof}

This will be applied in the proof of the following lemma from the main text:

\noindent \textbf{Lemma \ref{lem:moment_match}.}\textit{
Assume the auction can be implemented with $n$-moment strategies and fix a type $\theta\in\Theta$ and a distribution $g\in\mathcal G$. Then, the following must hold:
\begin{enumerate}[(a)]
\item Except for on a set of measure $0$, $\psi_{\theta,g}(\tau)$ is constant on $\{\tau\in \Theta:\tau\succ_g \theta\}$.
\item Let $c$ be the value of $\psi_{\theta,g}(\tau)$ on this set. Then there exists a type $\theta'\prec_g \theta$ such that, almost surely, \[\psi_{\theta,g}(\tau)\not=c\text{ for }\{\tau\in\Theta:\theta' \prec_g \tau\prec_g \theta\}.\]
\end{enumerate}}
\begin{proof}
\begin{enumerate}[(a)]
\item  Asssume for the sake of contradiction that $\psi_{\theta,g}(\tau)$ is nonconstant on the set \[S=\{\tau\in \Theta:\tau\succ_g \theta\}.\]
Let $\mu_\psi$ be the mean of $\psi_{\theta,g}(\tau)$ on $S$ with respect to the Lebesgue measure. Then by the Fundamental Lemma of the Calculus of Variations, there exists some $v\in\mathcal L^1(\Theta)$ such that
\[\int (\psi_{\theta,g}(\tau)-\mu_\psi) v(\tau) d\tau \not = 0 \text{ and } v(\tau)=0 \text{ for } \tau \not\in S.\]
Let let $\mu_{v}$ be the mean of $v$ on $S$. Then defining the vector $v_0\in\mathcal G_0$ with
\[v_0(\tau)=\begin{cases} v(\tau)-\mu_{v} & \tau\in S\\ 0 & \text{else}\end{cases}\]
we have that
\[\int \psi_{\theta,g}(\tau) v_0(\tau) d\tau = \int \psi_{\theta,g}(\tau) v(\tau) d\tau -\mu_\psi \cdot\mu_v=\int (\psi_{\theta,g}(\tau)-\mu_\psi) v(\tau) d\tau \not = 0.\]

Thus, by the definition of $\psi_{\theta,g}$, perturbing the distribution $g$ in the $v_0$ direction changes type $\theta$'s score bid. However, by Proposition \ref{prop:fsa_strat}, the strategy of type $\theta$ should only depend on the strategies of and the distribution over types in $S^c$ (which is unchanged by Lemma \ref{lem:curve_fixed}). Because none of these are changing due to this perturbation, type $\theta$'s strategy should not change.\footnote{To be more formal, the score bid distribution is the solution to a differential equation whose initial condition relates to the lowest score. Thus, if at some score, the distribution of lower scores is unchanged then this differential equation will be unchanged. See \cite{n-h} for more details.} 

\item For each $m\in [\underline m,\overline m]$, define the type \[f_{(2)}(m):=\frac{\int \rho_m(\tau)\mathbb I_{\tau\prec_g \theta} g(\tau)d\tau}{\int \mathbb I_{\tau\prec_g \theta} g(\tau)~d\tau}\] as in Proposition \ref{prop:fsa_strat}. Let $\theta'$ be some type satisfying
\[(m,f_{(2)}(m))\prec_g \theta' \prec_g \theta\]
for all $m$. With this type defined,
assume for the sake of contradiction that $\psi_{\theta,g}(\tau)=c$ on some set \[S\subset \{\tau\in\Theta:\theta' \prec_g \tau \prec_g \theta\}\] with positive measure. Then let $v\in\mathcal G_0$ with 
\[v(\tau)=\begin{cases} -c_1 & \tau\in \{\tau'\in \Theta:\tau'\succ_g \theta\}\\ c_2 & \tau \in S \\ 0 & \text{else}\end{cases}\]
for suitable positive constants $c_1,c_2\in\mathbb R_{++}$ such that
\[\int \psi_{\theta,g}(\tau) v(\tau) d\tau=0.\]
Perturbing a distribution in the direction $v$ moves mass from types that bid higher than $\theta$ to types that bid lower than $\theta$ but not significantly lower. The equation above implies that the strategy of type $\theta$ should not change locally if $g$ is perturbed in the direction $v$. We aim to show this is not the case. If type $\theta$'s score bid is unaffected by this perturbation then so are the bids of all types in the equivalence class $[\theta]_g$ by Lemma \ref{lem:curve_fixed}. If all of these strategies are unchanging under $v_0$, then by Proposition \ref{prop:fsa_strat}
\[\frac{\partial \left(\frac{\int \rho_m(\tau,g+\ep v)\mathbb I_{\tau\prec_{g+\ep v} \theta} (g+\ep v)(\tau)~d\tau}{\int\mathbb I_{\tau\prec_{g+\ep v} \theta} (g+\ep v)(\tau)~d\tau}\right)}{\partial \ep}\bigg\rvert_{\ep=0}=0\] for all $m$. Simplifying and noting that $\frac{\partial \int \mathbb I_{\tau\prec_{g+\ep v} \theta}(\tau)-\mathbb I_{\tau\prec_{g+\ep v} \theta}(\tau)d\tau}{\partial \ep}\bigg\rvert_{\ep=0}=0$ yields
\[\underbrace{\int \frac{\partial \rho_m(\tau,g+\ep v)}{\partial \ep} g(\tau)\mathbb I_{\tau\prec_g \theta}~d\tau}_{(\dagger)} + \int \rho_m(\theta) \mathbb I_{\tau\prec_g \theta}v(\tau)~d\tau - f_{(2)}(m)\int \mathbb I_{\tau\prec_g \theta}v(\tau)~d\tau=0. \]
By construction, $\mathbb I_{\tau\prec_g \theta}v(\tau)\not=0$ only when its value is positive and $\rho_m(\tau)>f_{(2)}(m)$. Thus, this equation implies that the first integral (denoted by $(\dagger)$) must be strictly negative for all $m$. This means that, on average, each type's equivalent that has marginal cost $m$ has a lower fixed cost upon perturbing in the $v$ direction. Intuitively, this cannot happen at all $m$. For example, let $(m,f)\sim_g (m',f')$ but let $(m,f)\prec_{g+\ep v} (m',f')$. Then
\[\rho_{m'}(m,f)<f'.\]
But necessarily, we also have that
\[\rho_{m}(m',f')>f.\]
This fact that one type becoming less competitive inherently means another type becomes more competitive is what will lead to a contradiction. Examining this more formally, we have that by the definition of $\rho_m$,

\[s^*(m,\rho_{m}(\tau,g+\ep v),g+\ep v)=s^*(\tau,g+\ep v).\]
If at $\ep=0$, we have that $\rho_m(\tau,g)=f_1$, then implicitly differentiating yields
    \[\frac{\partial s^*(m,f_1,g+\ep v)}{\partial\ep}\bigg\rvert_{\ep=0}+\left(\frac{\partial s^*(m,f,g+\ep v)}{\partial f}\bigg\rvert_{f=f_1}\right) \left(\frac{\partial\rho_{m}(\tau,g+\ep v)}{\partial\ep}\bigg\rvert_{\ep=0}\right)=\frac{\partial s^*(\tau,g+\ep v)}{\partial\ep}\bigg\rvert_{\ep=0}.\]
    Rearranging gives us that
    \[\frac{\partial\rho_{m}(\tau,g+\ep v)}{\partial\ep}\bigg\rvert_{\ep=0}=
    \frac{\frac{\partial s^*(\tau,g+\ep v)}{\partial\ep}\bigg\rvert_{\ep=0}-\frac{\partial s^*(m,f_1,g+\ep v)}{\partial\ep}\bigg\rvert_{\ep=0}}{\frac{\partial s^*(m,f,g))}{\partial f}\bigg\rvert_{f=f_1}}.\]
The terms on the RHS exist because of the differentiability assumptions on $n$-moment strategies. Returning to the expression $(\dagger)$, recall that, fixing a marginal cost $m$, we can consider the space $\Theta$ as the set of equivalence classes $[(m,f)]_g$. We can thus rewrite this integral as
\[\int \frac{\partial \rho_m(\tau,g+\ep v)}{\partial \ep} g(\tau)\mathbb I_{\tau\prec_g \theta}~d\tau=\int \int \left(\frac{\partial \rho_m(m',\rho_{m'}(m,f),g+\ep v)}{\partial \ep}~ dm'\right) \mathbb I_{(m,f)\prec_g \theta}~ g_m(f)~ df\]
where the density $g_m(f)$ satisfies
\[g_m(f)=h(s^*(m,f,g))\cdot s^*_2(m,f,g)\]
where $h$ is the equilibrium score density. Abusing notation, let $\rho_m:
\mathbb R\to\overline \Theta$ be the type with marginal cost $m$ that bids $s$ at the distribution $g$. Then we can rewrite $(\dagger)$ as
\[\int \int \left(\frac{\partial s^*(\rho_{m'}(s),g+\ep v)}{\partial\ep}\bigg\rvert_{\ep=0}-\frac{\partial s^*(\rho_m(s),g+\ep v)}{\partial\ep}\bigg\rvert_{\ep=0}~ dm'\right) \mathbb I_{s\le s^*(\theta,g)} h(s)~ ds.\]

From above, this expression must be negative for all $m$. Integrating $(\dagger)$ over all $m$ however gives us 
\[\int \int \int \left(\frac{\partial s^*(\rho_{m'}(s),g+\ep v)}{\partial\ep}\bigg\rvert_{\ep=0}-\frac{\partial s^*(\rho_m(s),g+\ep v)}{\partial\ep}\bigg\rvert_{\ep=0}~ dm'~dm\right) \mathbb I_{s\le s^*(\theta,g)} h(s)~ds=0.\]
This gives the desired contradiction. 
\end{enumerate}
\end{proof}

\textbf{Proposition \ref{prop:converse}.}\textit{
If a first-score auction has an informational precision of $n$ for $n<\infty$, then it satisfies equilibrium invariance.}

\begin{proof}
Assume there exists a type $\theta\in \text{Int}(\Theta)$ and two distributions $g_0$ and $g_1$ such that $[\theta]_{g_0}\not=[\theta]_{g_1}$. Then without loss of generality, let there be two types $(m,f_1)\in [\theta]_{g_0} $ and $(m,f_{n+2})\in [\theta]_{g_{n+2}}$ with $f_{n+2}<f_1$. Let $g_\lambda:=\lambda g_0+(1-\lambda g_1)$. By the differentiability of scores with respect to the distribution,  there exist $0=\lambda_1<\lambda_2<...<\lambda_{n+2}=1$ such that
\[(m,f_i)\in [\theta]_{g_i}\]
with $f_j<f_i$ if $j>i.$ By the continuity of the score bidding function, there exists a small open neighborhood $U_i$ around $(m,f_i)$ such that the set
\[S_{i}:=\bigg(U_i\cap \{\tau \in \Theta:\theta\succ_{g_{\lambda_i}} \tau\}\bigg)\subset \{\tau \in \Theta:\theta\succ_{g_{\lambda_i}}\tau
\text{ but }\theta\prec_{g_{\lambda_j}} \tau ~\text{ for all } j<i\}\]
and $S_i$ has positive measure for each $i\in \{1,...,n+2\}$. In words, the set $S_i$ contains the (competitive) types against which $\theta$ wins the auction at the distribution $g_{\lambda_i}$ but not at $g_{\lambda_j}$ for lower $j$.  Applying Lemma \ref{lem:moment_match} at each of these distributions, there must be types $\tau_i\in S_i$ and constants $a_i\not=b_i\in\mathbb R$ such that\footnote{See Figure \ref{fig:converse_b} for a graphical depiction of this.}
\[\psi_{\theta,g_{\lambda_i}}(\tau_i)=b_i \text{ and } \psi_{\theta,g_{\lambda_i}}(\tau_j)=a_i \text{ for all } j>i.\]
Aggregating this and recalling the definition of $\psi_{\theta,g_{\lambda_i}}(\tau)$,
\begin{adjustwidth}{-2cm}{-2cm}
\begin{equation*}\left(\begin{array}{ccc}
\zeta_\theta^1(\tau_1)  & \hdots &\zeta_\theta^n(\tau_{1})\\
\zeta_\theta^1(\tau_2)  & \hdots &\zeta_\theta^n(\tau_{2})\\
\vdots &  \ddots &\vdots\\
\zeta_\theta^1(\tau_{n+2})  & \hdots &\zeta_\theta^n(\tau_{n+2})\\
\end{array}\right)
\left(\begin{array}{ccc}
   \sigma_{\theta,1}(\theta,g_{\lambda_1}) & \hdots & \sigma_{\theta,1}(\theta,g_{\lambda_{n+2}})\\
   \sigma_{\theta,2}(\theta,g_{\lambda_1}) &  \hdots & \sigma_{\theta,2}(\theta,g_{\lambda_{n+2}})\\
   \vdots & \ddots &\vdots\\
   \sigma_{\theta,n}(\theta,g_{\lambda_1}) &  \hdots &\sigma_{\theta,n}(\theta,g_{\lambda_{n+2}})\\
\end{array}\right)
=
\left(\begin{array}{ccccc}
b_1 & \hdots &\hdots &\hdots&\hdots\\
a_1 & b_2 &\hdots & \hdots&\hdots\\
a_1 & a_2 & \hdots &\hdots\\
\vdots & \vdots & \ddots & \vdots&\vdots\\
a_1&a_2&\hdots & b_{n+1} &\hdots\\
a_1 & a_2 &... &a_{n+1}&b_{n+2}\\
\end{array}
\right)\end{equation*}\end{adjustwidth}

The rank of each matrix on the LHS is at most $n$ while the rank of the matrix on the RHS is at least $n+1$, giving a contradiction.
\end{proof}
    \color{black}

\section{Proofs from Section \ref{sub:ei_conds}}
In this section, I provide a proof of Theorem \ref{thm:rev_equiv} and Proposition \ref{prop:fsa_linear}.

\noindent \textbf{Theorem \ref{thm:rev_equiv}.}\textit{
A first-score auction with scoring rule $\Phi$ is informationally coarse if and only if it implements the same interim outcomes as the second-score auction with the same scoring rule at all distributions $g\in\mathcal G$.}
\begin{proof}
Fix a scoring rule $\Phi$ such that the first-score auction defined with this rule satisfies equilibrium invariance. Consider two types $\theta_1,\theta_2$ with the same break-even scores and assume for the sake of contradiction that $\theta_1\succ \theta_2$ in the first-score auction's equilibrium structure. Consider the sequence of distributions 
${(g_n)}_n\in \mathcal G$ defined by \[g_n=\frac{1}{n}g_0+\left(1-\frac{1}{n}\right)\cdot u\] 
where $u$ is the density of a uniform distribution over the set $\{\theta'\in \Theta:\theta_1\succ \theta'\succ \theta_2\}$.

Let $s_1$ be the score bid by $\theta_1$ at the distribution $u$ and let $\overline U$ be the payoff of type $\theta_2$ from bidding this score, conditional on winning. Note that $s_1<s_{BE}(\theta_1)=s_{BE}(\theta_2)$ so $\overline U$ is strictly positive. Note that for any type's equilibrium bid $(p,q)$, we must have that
\[0\ge p-(\bar m,\bar f)\cdot (q^\eta,1)\implies \bar m + \bar f \ge p-\theta\cdot (q^\eta,1)\]
for any $\theta\in \Theta$. Thus, conditional on winning, type $\theta_2$'s payoff is bounded above by $\overline m+\overline f.$

As $n$ grows, $X(\theta_2,g_n)\to 0$ but $X(\theta_1,g_n)\to 1$. Thus, let $N\in \mathbb R$ large enough such that 
\[\frac{X(\theta_1,g_N)}{X(\theta_2,g_N)}>\frac{\overline m+\overline f}{\overline U}.\] This implies that expected utility to type $\theta_2$ will be strictly greater from bidding $s_1$ than their bid in the mechanism. As a result, this mechanism is not incentive-compatible, giving a contradiction. 

This means that the first-score auction and the second-score auction have the same interim contract allocation rule $X$. In both auctions, all types in the equivalence class $[(\overline m,\overline f)]$ receive zero utility. By the envelope theorem, the slope of the indirect utility functions in both auction formats are the same upon changing fixed costs. Thus, fixing marginal costs and applying the Fundamental Theorem of Calculus on each of these lines yields that the interim utility functions from the two auction formats are the same. Then, also by the envelope theorem, the interim effort allocations are equal to the partial derivative of this utility function with respect to marginal cost, which are the same across these two mechanisms.
\end{proof}

\noindent \textbf{Proposition \ref{prop:fsa_linear}.}\textit{
A first-score auction with scoring rule $\Phi$ is informationally coarse if and only if the break-even effort function, $e_{BE}:={q_{BE}}^\eta:\Theta\to [0,1]$ is linear in fixed cost.}
\begin{proof}

    Fix such a first-score auction and a type $(m,f)\in \Theta$. Recall that given a score to bid, all types with the same marginal cost will optimize by picking the same contract. 
    
     As a result, if the other participant's type is in $[(m,z)]$ for $(m,z)\in\overline \Theta$, an agent with marginal cost $m$ will provide the contract $(p_{BE}(m,z),q_{BE}(m,z))$ in a second-score auction. Thus, the expected effort of this type in this auction, at any distribution $g\in\mathcal G$, is 
    \[\int e_{BE}(m,z) \cdot \mathbb I_{z\ge f} ~ g([(m,z)]) dz.\]

    From Proposition $1$, the expected effort of this type in a first-score auction is
    \[e_{BE}\left(m,\frac{\int z\cdot \mathbb I_{z\ge f}~ g([(m,z)]) dz}{\int \mathbb I_{z\ge f} ~g([(m,z)]) dz}\right)\cdot \int \mathbb I_{z\ge f} ~ g([(m,z)]) dz.\]

    These two expressions must be equal for a scoring rule that satisfies equilibrium invariance, by Theorem $2$. In a converse to Jensen's inequality, we'll show that these two expressions being equal for all $x$ implies that $e_{BE}$ must be linear. Firstly, we have that $e_{BE}(m,\cdot)$ is smooth because $\Phi$ is. This means its second-derivative must be continuous. Now assuming, $e_{BE}$ is not linear, there must be some subinterval of $[\underline f,\overline f]$ on which this function is strictly convex or concave. Let $[f_1,f_2]$ be the highest such interval and WLOG let the function be strictly convex here. Then we have that $e_{BE}$ is convex on $[f_1,\overline f]$ so by the strict form of Jensen's inequality,  
    \[e_{BE}\left(m,\frac{\int z\cdot \mathbb I_{z\ge f'}~ g([(m,z)]) dz}{\int \mathbb I_{z\ge f'} ~g([(m,z)]) dz}\right)< \frac{\int e_{BE}(m,z) \cdot I_{z\ge f'} ~ g([(m,z)]) dz} {\int I_{z\ge f'} ~ g([(m,z)]) dz}.\]

    for all $f'\in [f_1,f_2)$ giving us a contradiction. Thus, $e_{BE}$ must be linear in fixed cost.
\end{proof}

\section{Proofs from Section \ref{sec:microfound}}
\noindent \textbf{Lemma \ref{lem:belief_extend}.}
The first-order belief structure $\bm{\tilde \pi}$ uniquely defines beliefs of the form $\hat \pi_i:\mathcal T_i(I_i)\to \Delta(G\times \mathcal T_{-i}( I_{-i}))$ for all $i$.
\begin{proof}
By the definition of the state, each $g=(g_1,...,g_n)$ defines a unique distribution over opponents' payoff types, which is an element of $\Delta(\Theta_{-i})$. By the  definition of an information technology, for each agent $j$ and payoff type $\theta_j$, there is a unique element of $I_j(\theta_j)$ that contains $g$. In other words, there is a unique $t_j\in I_j(\theta_j)$ such that $g\in E(t_j)$ Thus, fixing a state $g$, which is a distribution over payoff types, we can extend this to a distribution over belief types as follows:
\[\mathbb P_g(A):=\mathbb P_g\bigg(\{\theta(t_{-i}):t_{-i}\in A, g\in E(t_{j}) \forall j\not=i\}\bigg)\]
for $A\subset \mathcal T_{-i}(\mathbb I_{-i})$.
We have uniquely mapped each element $g\in\mathcal G$ to a distribution over belief types. We can extend a distribution over states to a joint distribution over states and payoff types by interpreting the definition of each state in the support as a distribution over payoff types. We can then extend this joint distribution over states and payoff types to one over states and belief types with the definition above.

\end{proof}

\noindent \textbf{Theorem \ref{thm:rev_equiv}.}
\textit{A first-score auction with scoring rule $\Phi$ is informationally coarse if and only if it implements the same interim outcomes as the second-score auction with the same scoring rule at all distributions $g\in\mathcal G$.}
 \begin{proof}$(\Rightarrow)$ Starting with a coarse beliefs equilibrium, fix a type $t_i=(\theta_i,E_i)\in\mathcal T_i(I_i^*)$ and a distribution $g\in E_i$. For $i\in\{1,...,n\}$, consider a belief structure satisfying
        \[\tilde \pi_i(t_j)=\delta_g\]
        for all $t_j=(\theta_j,E_j)\in \mathcal T_j(I^*_j)$ such that $g\in E_j$. Under this belief structure, to type $t_i$, it will appear that the type distribution being $g$ is common knowledge. By the definition of a common prior equilibrium and robust preferences, this immediately implies that at this specific belief structure, each agent's action $a_i^*(t_i)$ is a best-response to other agents' equilibrium actions. Thus, this forms an equilibrium under the belief that $g$ is a common prior.\\
 $(\Leftarrow)$ Fix agent $i$ and assume that for all $g\in\mathcal G$, opponent $j$ with payoff type $\theta_{j}$ plays the strategy $a^{CP}_{j}(\theta_{j},g)$. I will show agent $i$ can do no better than adopting the common prior strategy. Let $t^*_i(g)$ be type  $\theta_i$'s belief type under the information acquisition strategy $I^*_i$ and let $a^*(t^*_i(g))=a_i^{CP}(\theta_i,g)$. This is well-defined because, by the given, 
 \[t_i^*(g)=t_i^*(g')\implies a_i^{CP}(\theta_i,g)=a_i^{CP}(\theta_i,g').\]
 We have that by definition,
        \[a_{CP}(\theta_i,g)\in \argmax_{a\in A}\mathbb E_{g}[U(\theta(t_i),\theta_{-i},a,a_{CP}(\theta_{-i},g)]\]
        Thus, for any prior belief $\pi$ and alternative strategy that yields the action $a(g)$ at the type distribution $g$, we have that 
        \[\mathbb E_{g\sim \pi}\bigg[\mathbb E_{g}[U(\theta(t_i),\theta_{-i},a(g),a_{CP}(\theta_{-i},g)]\bigg]\le \mathbb E_{g\sim \pi}\bigg[\mathbb E_{g}[U(\theta(t_i),\theta_{-i},a^*(t^*_i(g)),a_{CP}(\theta_{-i},g)]\bigg].\]
    The RHS is the expected utility of playing the strategy $(I^*,a^*).$ Because this holds for all $\pi,$ by definition this strategy is robustly optimal and thus this strategy profile forms a CBE. 
\end{proof}

\end{document}